\theoremstyle{plain}
\newtheorem{theorem}{Theorem}[section]
\newtheorem{proposition}[theorem]{Proposition}
\newtheorem{lemma}[theorem]{Lemma}
\theoremstyle{definition}
\newtheorem{assumption}[theorem]{Assumption}
\theoremstyle{remark}
\newtheorem{remark}[theorem]{Remark}
\numberwithin{equation}{section}
\numberwithin{theorem}{section}
\newcommand{\mc}[1]{{\mathcal #1}}
\newcommand{\bs}[1]{{\boldsymbol #1}}
\newcommand{\bb}[1]{{\mathbb #1}}
\newcommand{\rmd}{\mathrm{d}}
\newcommand{\eps}{\varepsilon}
\def\const{({\rm const.})\,}
\newcommand{\supp}{\mathop{\rm supp}\nolimits}
\renewcommand{\div}{\mathop{\rm div}\nolimits}
\title[Time evolution of concentrated vortex rings]{Time evolution of concentrated vortex rings}
\author[P.\ Butt\`a]{Paolo Butt\`a}
\address{Paolo Butt\`a\hfill\break \indent
Dipartimento di Matematica, 
Sapienza Universit\`a di Roma,
\hfill\break \indent
P.le Aldo Moro 5, 00185 Roma, Italy}
\email{butta@mat.uniroma1.it}
\author[C.\ Marchioro]{Carlo Marchioro}
\address{Carlo Marchioro \hfill\break \indent
Dipartimento di Matematica, 
Sapienza Universit\`a di Roma,
\hfill\break \indent
P.le Aldo Moro 5, 00185 Roma, Italy \hfill\break \indent and \hfill\break \indent  
International Research Center M\&MOCS, Universit\`a di L'Aquila \hfill\break \indent Palazzo Caetani, 04012 Cisterna di Latina (LT), Italy}
\email{marchior@mat.uniroma1.it}
\begin{document}

\begin{abstract}
We study the time evolution of an incompressible fluid with axisymmetry without swirl when the vorticity is sharply concentrated. In particular, we consider $N$ disjoint vortex rings of size $\eps$ and intensity of the order of $|\log\eps|^{ -1}$. We show that in the limit $\eps\to 0$, when the density of vorticity becomes very large, the movement of each vortex ring converges to a simple translation, at least for a small but positive time.
\end{abstract}

\keywords{Incompressible Euler flow, vortex rings.}

\subjclass[2010]{
Primary 76B47; 
Secondary 37N10.  
}

\maketitle
\thispagestyle{empty}

\section{Introduction and main result}
\label{sec:1}
In this paper we study the time evolution of an incompressible fluid with constant density when the axisymmetry without swirl is present and the vorticity is  sharply concentrated. A similar situation has been largely investigated in the case of planar symmetry, a short summary and references on this subject are given in Appendix \ref{app:a}. Here, we show how several  not trivial problems arise in presence of this different symmetry.

The evolution of an incompressible inviscid fluid of unitary density filling the whole space $\bb R^3$ is governed by the Euler equations,
\begin{equation}
\label{eule}
\partial_t  \bs u + (\bs u\cdot \nabla) \bs u  =  -\nabla p \,, \qquad \nabla \cdot \bs u = 0 \,,
\end{equation}
where $\bs u = \bs u(\bs\xi,t)$ and $p=p(\bs\xi,t)$ are the velocity and pressure respectively, $\bs\xi = (\xi_1,\xi_2,\xi_3)$ denotes a point in $\bb R^3$, and $t\in \bb R_+$ is the time. More precisely, the evolution is determined by the Cauchy problem associated to \eqref{eule} with assigned boundary conditions. In what follows, we always assume that $u$ decays at infinity. Introducing the vorticity $\bs\omega$, defined by
\begin{equation}
\label{vort}
\bs\omega = \nabla \wedge \bs u \,,
\end{equation}
this assumption implies that the velocity $u$ can be reconstructed from $\bs\omega$ as
\begin{equation}
\label{u-vort}
\bs u(\bs\xi,t) = - \frac{1}{4\pi} \int\! \rmd \bs\eta \, \frac{(\bs\xi-\bs\eta) \wedge \bs\omega(\bs\eta,t)}{|\bs\xi-\bs\eta|^3} \,.
\end{equation}
On the other hand, by \eqref{eule} and \eqref{vort}, the vorticity evolves governed by the equation
\begin{equation}
\label{vorteq}
\partial_t \bs\omega + (\bs u\cdot \nabla) \bs\omega  = (\bs \omega\cdot \nabla) \bs u  \,,
\end{equation}
so that Eqs.\ \eqref{u-vort} and \eqref{vorteq} give a formulation of the Euler equations (with velocity decaying at infinity) in terms of the vorticity $\bs\omega$.

Denoting by $(z,r,\theta)$ the cylindrical coordinates, we recall that the vector field $\bs F$ of cylindrical components $(F_z, F_r, F_\theta)$ is called axisymmetric without swirl if $F_\theta=0$ and $F_z$ and $F_r$ are independent of $\theta$. 

We observe that the axisymmetry is preserved by the evolution \eqref{eule}. Moreover, when restricted to axisymmetric velocity fields $\bs u(\bs\xi,t) = (u_z(z,r,t), u_r(z,r,t), 0)$, Eqs.\ \eqref{vort} and \eqref{vorteq} reduces to
\begin{equation}
\label{omega}
\bs\omega = (0,0,\omega_\theta) = (0,0,\partial_z u_r - \partial_r u_z)
\end{equation}
and, denoting henceforth $\omega_\theta$ by $\omega$, 
\begin{equation}
\label{omeq}
\partial_t \omega + (u_z\partial_z + u_r\partial_r) \omega - \frac{u_r\omega}r = 0 \,.
\end{equation}
We also notice that the solenoidal condition $\nabla \cdot \bs u = 0$ reads
\[
\partial_z(ru_z) + \partial_r(ru_r) = 0\,.
\]
Finally, by \eqref{u-vort}, $u_z = u_z(z,r,t)$ and $u_r=u_r(z,r,t)$ are given by
\begin{align}
\label{uz}
u_z & = - \frac{1}{2\pi} \int\! \rmd z' \!\int_0^\infty\! r' \rmd r' \! \int_0^\pi\!\rmd \theta \, \frac{\omega(z',r',t) (r\cos\theta - r')}{[(z-z')^2 + (r-r')^2 + 2rr'(1-\cos\theta)]^{3/2}} \,,
\\ \label{ur}
u_r & = \frac{1}{2\pi} \int\! \rmd z' \!\int_0^\infty\! r' \rmd r' \! \int_0^\pi\!\rmd \theta \, \frac{\omega(z',r',t) (z - z')}{[(z-z')^2 + (r-r')^2 + 2rr'(1-\cos\theta)]^{3/2}} \,.
\end{align}
Hence, the axisymmetric solutions to the Euler equations are the solutions to Eqs.\ \eqref{omeq}, \eqref{uz}, and \eqref{ur}.

We further observe that Eq.~\eqref{omeq} means that the quantity $\omega/r$ remains constant along the flow generated by the velocity field, i.e., 
\begin{equation}
\label{cons-omr}
\frac{\omega(z(t),r(t),t)}{r(t)} = \frac{\omega(z(0),r(0),0)}{r(0)} \,,
\end{equation}
with $(z(t),r(t))$ solution to
\begin{equation}
\label{eqchar}
\dot z(t) = u_z(z(t),r(t),t) \,, \qquad \dot r(t) = u_r(z(t),r(t),t) \,.
\end{equation}

It is possible to consider non-smooth initial data by assuming  \eqref{uz}, \eqref{ur}, \eqref{cons-omr}, and \eqref{eqchar} as a weak formulation of the Euler equations in the framework of  axisymmetric solutions. An equivalent weak formulation is obtained from \eqref{omeq} by a formal integration by parts,
\begin{equation}
\label{weq}
\frac{\rmd}{\rmd t} \omega_t[f] = \omega_t[u_z\partial_z f + u_r\partial_r f + \partial_t f] \,,
\end{equation}
where $f = f(z,r,t)$ is any bounded smooth test function and 
\[
\omega_t[f] := \int\! \rmd z \!\int_0^\infty\! \rmd r \, \omega(z,r,t) f(z,r,t) \,.
\]
In particular, global (in time) existence and uniqueness of the weak solution to the associated Cauchy problem holds when the initial vorticity $\omega^0(z,r) := \omega(z,r,0)$ is a bounded function with compact support contained in the open half-plane $\Pi:= \{(z,r) \colon r>0\}$, see, e.g., \cite[Appendix]{CS}. Moreover, the support of the vorticity remains in the open half-plane $\Pi$ at any time. 

A point of the half-plane $\Pi$ corresponds to a circumference in the whole space and the special class of axisymmetric solutions without swirl are called sometime \textit{smoke rings}, because there exist particular solutions whose shape remains constant in time (the so-called steady vortex ring) and translate in the $z$-direction with constant speed (the propagation velocity). The existence of such solutions is an old issue, see \cite{FrB74,AmS89} for a rigorous proof by means of variational methods. We also quote the review \cite{ShL92} as a bibliographic reference in physical literature on axisymmetric solutions without swirl.

Here, we are interested in the special class of initial data when the vorticity is sharply concentrated.

We first discuss the case when the concentration occurs around a single point $(z_0,r_0)$. Denoting by $\Sigma((z,r)|\rho)$ the open disk of center $(z,r)$ and radius $\rho$, we assume that $\omega_\eps(z,r,0)$ is non-negative and that its support is inside the disk $\Sigma((z_0,r_0)|\eps)$, with $\eps>0$ a small parameter. We also let
\begin{equation}
\label{Momega}
N_\eps = \int\!\rmd z \!\int_0^\infty\!\rmd r\,\omega_\eps(z,r,0)\,.
\end{equation}
The case of a steady vortex ring is analyzed in \cite{Fr70}, where the propagation velocity is shown to be approximately equal to $N_\eps|\log \eps|/(4\pi r_0)$ if $\eps$ is very small. We are interested in the extension of this result to \textit{any} concentrated initial data and not only to the particular case of steady vortex rings. The analysis of the latter case suggests that to obtain a finite limit propagation velocity $N_\eps$ has to be chosen of the order of $|\log \eps |^{-1}$. In fact, such analysis has been done in \cite{BCM00}, where it is proven that if there are $M,a>0$ such that, for any $\eps$ small enough,
\[
N_\eps = \frac{a}{|\log\eps|}\,,  \qquad 0 \le \omega_\eps(z,r,0) \le \frac{M}{\eps^2|\log \eps|}\,,
\]
then for any $T>0$ the evolution $\omega_\eps(z,r,t)$, $t\in [0,T]$, is concentrated in a disk $\Sigma((z_\eps(t),r_\eps(t))|D_\eps)$ of radius $D_\eps = \eps |\log \eps|$, in the sense that
\[
\lim_{\eps\to 0} |\log \eps| \int_{\Sigma((z_\eps(t),r_\eps(t))|D_\eps)}\! \rmd z\, \rmd r\, \omega_\eps(z,r,t) = a\,.
\]
Moreover,
\[
\lim_{\eps\to 0} (z_\eps(t),r_\eps(t)) = (z_0+vt,r_0)\,, \quad v = \frac{a}{4\pi r_0}\,.
\]

The previous result holds for one vortex ring alone. The aim of the present paper is to extend the analysis to the case of $N$ vortex rings. The proof is based on a not trivial improvement of the previous result on the motion of a single vortex ring. Actually, in \cite{BCM00} the main effort has been made to prove that the vorticity remains concentrated around a point, whereas small filaments of vorticity mass $m_\eps$ could go away. As $\eps \to 0$ this point has a linear motion and $m_\eps \to 0$. But now, to control the interaction among the vortex rings, we need to show that these filaments remain close to their concentration points of vorticity. We are able to prove such stronger property of the evolution during a suitable time interval, which cannot be chosen arbitrarily large but it can be fixed \textit{independent} of $\eps$ (a sort of rough localization). With respect to the planar case, where it is possible to prove a sharp localization, there is a main difference: the planar symmetry gives rise to a quasi-Lipschitz velocity field, while in the axisymmetry without swirl  the velocity field is much more singular.

We now state the main result of the paper. Given a small parameter $\eps\in (0,1)$, we consider initial data for which the vorticity is supported in $N$ disks, i.e., an initial vorticity of the form
\begin{equation}
\label{in}
\omega_\eps(z,r,0) = \sum_{i=1}^N  \omega_{i,\eps}(z,r,0)\,,
\end{equation}
where $\omega_{i,\eps}(z,r,0)$, $i=1,\ldots, N$, are functions with definite sign such that
\begin{equation}
\label{initial}
\Lambda_{i,\eps}(0) := \supp\, \omega_{i,\eps}(\cdot,0) \subset \Sigma(\zeta^i|\eps)\,,
\end{equation}
for fixed $\zeta^i = (z_i,r_i)\in \Pi$. We assume $\eps$ so small to have
\[
\overline{\Sigma(\zeta^i|\eps)} \subset\Pi\quad \forall\, i = 1,\ldots,N\,, \quad  \Sigma(\zeta^i|\eps)\cap \Sigma(\zeta^j|\eps)=\emptyset\quad \forall\, i \ne j\,.
\]
As already noticed, this implies that the solution $\omega_\eps(z,r,t)$ is defined globally in time and its support remains inside $\Pi$. In view of the axisymmetry, each component $\omega_{i,\eps}(z,r,0)$ in the right-hand side of \eqref{in} represents a  vortex ring, of width of the order of $\eps$ and radius $\zeta^i$ (which is chosen independent of $\eps$). In general, the signs of the functions $\omega_{i,\eps}(z,r,0)$ can be different among each other, and we assume that the intensity of the $i$-th ring vanishes logarithmically as $\eps\to 0$. More precisely, we assume that there are $N$ real parameters $a_1,\ldots, a_N$ such that
\begin{equation}
\label{ai}
|\log\eps| \int\!\rmd z \!\int_0^\infty\!\rmd r\, \omega_{i,\eps}(z,r,0) = a_i \quad \forall\,i=1,\ldots,N\,.
\end{equation}
We further suppose that there is $M>0$ such that
\begin{equation}
\label{Mgamma}
|\omega_{i,\eps}(z,r,0)| \le \frac{M}{\eps^2|\log\eps|} \quad \forall\, x\in\bb R^2\quad \forall\, i=1,\ldots,N\,.
\end{equation}

The decomposition \eqref{in} naturally extends to time $t>0$ because, in view of \eqref{cons-omr},
\begin{equation}
\label{in-t}
\omega_\eps(z,r,t) = \sum_{i=1}^N  \omega_{i,\eps}(z,r,t)\,,
\end{equation}
with $\omega_{i,\eps}(z,r,t)$ the time evolution of the $i$th vortex ring, i.e., such that
\begin{equation}
\label{cons-omr_ni}
\omega_{i,\eps}(z(t),r(t),t) := \frac{r(t)}{r(0)} \omega_{\eps,i}(z(0),r(0),0)\,.
\end{equation}

The main result is the content of the following theorem.

\begin{theorem}
\label{thm:1}
Assume the initial data $\omega_\eps(z,r,0)$ verify \eqref{in}, \eqref{initial}, \eqref{ai}, and \eqref{Mgamma}, and define, for $\zeta^i = (z_i,r_i)\in \Pi$, $i=1,\ldots,N$, as in \eqref{initial},
\begin{equation}
\label{free_m}
\zeta^i(t) :=  \zeta^i + \begin{pmatrix} v_i \\ 0 \end{pmatrix} t \qquad \text{with}\quad v_i =\frac{a_i}{4\pi r_i}\,.
\end{equation}
Then, for any $R>0$ such that the closed disks $\overline{\Sigma(\zeta^i|R)}$ are mutually disjoint there exist $\eps_R\in (0,1)$ and $T_R>0$ such that, for any $t\in[0,T_R]$ and $ i=1,\ldots,N$, the following holds true.

\begin{itemize}
\item[(1)] $\Lambda_{i,\eps}(t) := \supp\, \omega_{i,\eps}(\cdot,t) \subseteq \Sigma(\zeta^i(t)|R)$ for any $\eps\in (0,\eps_R]$, and the disks $\Sigma(\zeta^i(t)|R)$ are mutually disjoint.
\item[(2)] There are $\zeta^{i,\eps}(t)\in \Pi$ and $R_\eps>0$ such that
\[
\lim_{\eps\to 0}|\log\eps| \int_{\Sigma(\zeta^{i,\eps}(t)|R_\eps)}\!\rmd z\,\rmd r\, \omega_{i,\eps}(z,r,t) = a_i\quad \forall\, i=1,\ldots,N\,,
\]
with
\[
\lim_{\eps\to 0} R_\eps = 0\,, \quad \lim_{\eps\to 0} \zeta^{i,\eps}(t) = \zeta^i(t)\,.
\]
\end{itemize}
\end{theorem}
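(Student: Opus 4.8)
The plan is to propagate both assertions simultaneously by a continuity argument. Fix $R$ as in the statement and let $[0,t^\ast)$ be the maximal interval on which (1) holds with the disks $\Sigma(\zeta^i(t)|R)$ mutually disjoint; the goal is to show that for $\eps$ small one may take $t^\ast\ge T_R$ with $T_R>0$ independent of $\eps$. Since each $\omega_{i,\eps}$ solves the weak equation \eqref{weq} driven by the total field $\bs u$, the choice $f\equiv1$ shows that every mass $M_i:=\int\omega_{i,\eps}(x,t)\,\rmd x=a_i/|\log\eps|$ is exactly conserved, while $f=r^2$ shows that the per-ring impulse $\int r^2\omega_{i,\eps}\,\rmd x$ changes only through the cross-interaction---its self-part $2\int r\,u_r^{(i)}\omega_{i,\eps}\,\rmd x$ vanishes---so it stays nearly constant under (1). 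I monitor the center of vorticity $B_i(t):=M_i^{-1}\!\int x\,\omega_{i,\eps}(x,t)\,\rmd x$ and the moment of inertia $I_i(t):=\int|x-B_i(t)|^2\,|\omega_{i,\eps}(x,t)|\,\rmd x$, and I will take $\zeta^{i,\eps}(t):=B_i(t)$ and $R_\eps:=\eps|\log\eps|$ in (2).

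\emph{Motion of the centers.} From \eqref{weq} with $f=z$ and $f=r$ one gets $\dot B_i=M_i^{-1}\!\int\bs u\,\omega_{i,\eps}\,\rmd x$. Near-conservation of the impulse keeps the $\omega_{i,\eps}$-mean of $r^2$ at $r_i^2+o(1)$, so the $r$-component of $B_i$ remains $r_i+o(1)$, matching the constant radius of the free motion \eqref{free_m}. For the $z$-component I split $\bs u$ into the self-field of $\omega_{i,\eps}$ and the fields of the other rings; under (1) the latter act through the smooth part of the kernels \eqref{uz}--\eqref{ur} and contribute at rate $O(M_j)=O(|\log\eps|^{-1})$, while the self-field reproduces, via the single-ring computation of \cite{BCM00}, the speed $M_i|\log\eps|/(4\pi r_i)+o(1)=v_i+o(1)$. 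Integrating over $[0,t^\ast)$ gives $B_i(t)=\zeta^i(t)+o(1)$, which is the center part of (2).

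\emph{Control of the spread.} Differentiating $I_i$ by \eqref{weq} with $f=|x-B_i(t)|^2$, the term proportional to $\int(x-B_i)\,\omega_{i,\eps}=0$ drops, leaving
\[
\dot I_i(t)=2\int\omega_{i,\eps}(x)\,(x-B_i)\cdot\bs u(x)\,\rmd x\,.
\]
The interaction part is $O(|\log\eps|^{-1})M_i^{1/2}I_i^{1/2}$ by Cauchy--Schwarz and the smoothness of the foreign kernels. The self-part is a double integral against the kernel of \eqref{uz}--\eqref{ur}; its leading singular piece is the planar point-vortex kernel $K_0$, which is antisymmetric and orthogonal to $x-y$, so after symmetrizing in $x\leftrightarrow y$ it contributes $(x-y)\cdot K_0(x,y)=0$ and cancels exactly. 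What survives is the curvature/logarithmic correction of the axisymmetric kernel; estimating it with the a priori concentration, so that the effective displacements are of order $\sqrt{I_i}$ rather than $R$, and with the amplitude bound \eqref{Mgamma}, yields a differential inequality $\dot I_i\le C\,I_i+o(\eps^2M_i)$. As $I_i(0)\le C\eps^2M_i$, Gronwall keeps $I_i(t)\le C\eps^2M_i$ on a time interval of $\eps$-independent length. By Chebyshev the mass outside $\Sigma(B_i(t)|R_\eps)$ is at most $I_i/R_\eps^2\le CM_i|\log\eps|^{-2}=o(M_i)$, so $|\log\eps|\int_{\Sigma(\zeta^{i,\eps}(t)|R_\eps)}\omega_{i,\eps}\to a_i$ with $R_\eps\to0$, which is the concentration part of (2).

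\emph{Localization of the filaments.} It remains to upgrade the $L^1$-statement (2) to the pointwise statement (1) by controlling the thin filaments the moment estimate does not detect; this is the genuinely new and hardest point. I argue in Lagrangian variables: a particle $x(t)$ carried by \eqref{eqchar}, with $d_i(t):=|x(t)-B_i(t)|$, satisfies $|\dot d_i|\le|\bs u(x(t))-\dot B_i|$. The decisive ingredient is a two-scale bound for $\bs u$: inside the $\eps$-core the field is large but, to leading order, purely rotational about the core and hence does not increase $d_i$, whereas in the outer region $\{d_i>R_\eps\}$ the field is produced---by (1)---only by vorticity lying in $\Sigma(\zeta^i(t)|R)$, where the smooth part of \eqref{uz}--\eqref{ur}, the total mass $O(|\log\eps|^{-1})$, and the escaped-mass smallness from the previous step give $|\bs u-\dot B_i|\le C|\log\eps|^{-1}$. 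Thus $d_i$ crosses the annulus $R_\eps\le d_i\le R$ only at speed $O(|\log\eps|^{-1})$, so traversing it needs time at least of order $(R-R_\eps)|\log\eps|\to\infty$; no particle starting in $\Sigma(\zeta^i|\eps)$ can reach $\partial\Sigma(\zeta^i(t)|R)$ within an $\eps$-independent time $T_R$, and shrinking $T_R$ to keep the straight trajectories \eqref{free_m} in disjoint disks closes the bootstrap. This Lagrangian estimate is exactly where the axisymmetric problem departs from the planar one: the field \eqref{uz}--\eqref{ur} is not quasi-Lipschitz, so trajectory separations cannot be bounded by a log-Lipschitz modulus of continuity; the argument must instead rely on the two-scale splitting and on the self-consistency of the concentration bound, which is why $T_R$ is positive and $\eps$-independent but cannot be taken arbitrarily large.
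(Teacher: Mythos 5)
There are two genuine gaps, and they sit exactly at the two places where the axisymmetric problem differs from the planar one. First, your Gronwall bound $\dot I_i\le C I_i+o(\eps^2 M_i)$, hence $I_i(t)\le C\eps^2 M_i$, is unjustified and in fact fails. The self-interaction does not reduce to the antisymmetric planar kernel: by Lemma \ref{lem:3} the axisymmetric kernel contains the extra term $L(x,y)$ of \eqref{sL}, directed along the $z$-axis and of size $\log(1/|x-y|)/(4\pi x_2)$, which is \emph{not} orthogonal to $x-y$ and does not cancel under symmetrization. A fluid element at distance $\delta\gg\eps$ from the core then drifts relative to $\dot B^\eps$ at rate of order $\log(\delta/\eps)/|\log\eps|$, which is not dominated by $C|x-B^\eps|$ down to scale $\eps$; integrated over an $\eps$-independent time this destroys any bound of the form $I_i\sim\eps^2 M_i$. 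This is precisely why the paper states that the sharp planar a priori estimate on the moment of inertia is unavailable here and instead obtains only $I_\eps(t)\le C_2/|\log\eps|^2$ (Lemma \ref{lem:2}), as a consequence of the energy-based concentration result (Lemma \ref{lem:1}) — a spread scale $\sqrt{I/M}\sim|\log\eps|^{-1/2}$, nowhere near $\eps$.

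Second, your Lagrangian outer-region bound $|\bs u-\dot B_i|\le C|\log\eps|^{-1}$, with the conclusion that crossing the annulus takes time $(R-R_\eps)|\log\eps|\to\infty$, is wrong: since $\dot B_i\to v_i=a_i/(4\pi r_i)\ne 0$, a particle at distance $O(1)$ from the core feels a self-field of size $O(|\log\eps|^{-1})$ while the center translates at speed $O(1)$, so $d_i$ grows at rate $|v_i|+o(1)$. Compare the $O(1)$ constant $C_6$ in estimate \eqref{stimv} of Lemma \ref{lem:4}: this order-one escape rate is exactly the mechanism forcing $T_R$ to be small, and your divergent traversal time would instead yield localization on arbitrarily long intervals, contradicting even your own closing remark. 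Moreover, even granting your $I$-bound, the Chebyshev control $m_t\lesssim M_i/|\log\eps|^2$ of the escaped mass is far too weak: the filament mass enters the radial-velocity estimate through the term $\sqrt{M m_t(R_t/2)/(\pi\eps^2|\log\eps|)}$, which with your bound diverges like $(\eps|\log\eps|^2)^{-1}$. What is needed — and what is the paper's central new step, Proposition \ref{prop:1} — is the super-polynomial bound $m_t(\bar r/8)\le\eps^\ell$, obtained by iterating a mollified mass inequality $\lfloor|\log\eps|\rfloor$ times; nothing in your argument produces smallness beyond $O(|\log\eps|^{-2})$. Your reduction of the mutual interaction to a Lipschitz field of size $O(|\log\eps|^{-1})$ and the identification of the speed $v_i$ via the single-ring asymptotics do match the paper's reduced-system strategy, but without the two estimates above the bootstrap does not close.
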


The proof of Theorem \ref{thm:1} is obtained in two separate steps. We first prove an analogous result for a ``reduced system'' which describes the motion of a single vortex ring in a suitable external time-dependent vector field. The original problem is then solved by simulating the motion of each vortex ring by means of the reduced system, in which the external field describes the force due to its interaction with the other rings.

The same approach has been used in the planar case, where a key ingredient to prove the localization property is a quite sharp a priori estimate on the moment of inertia, which is not available here because the velocity field is not a Lipschitz function. Indeed, the case of one vortex ring alone has been successfully solved by making use of the energy conservation. But this is not sufficient to prove the localization property, which appears necessary to cover the case of many vortex rings. We then use a mix of the two strategies. To overcome the lack of Lipschitz property, we apply the energy conservation to control the growth in time of the moment of inertia. This estimate allows us to build up an iterative scheme to deduce the sharp localization property. The price to pay is that the method works only if the time is not too large.

The plan of the paper is the following. In the next section we introduce the reduced model and prove Theorem \ref{thm:1} as a corollary of the analogous result for the reduced system. The proof of the latter is given in Section \ref{sec:3}. Some comments on related problems are given in Section \ref{sec:4}. As already mentioned, in Appendix \ref{app:a} we briefly recall the known result in the case of planar symmetry, while Appendix \ref{app:b} is devoted to the proof of some technical results. 

\section{Proof of the main result}
\label{sec:2}

\subsection{Preliminaries}
\label{sec:2.1}

To have a more compact notation, we rename the variables by setting
\begin{equation}
\label{nv}
x = (x_1,x_2) := (z,r)\,,
\end{equation}
so that $\Pi = \{x=(x_1,x_2) \in\bb R^2\colon x_2>0\}$.

In what follows, we tacitly extend the vorticity to a function on the whole plane $\bb R^2$ by setting $\omega_\eps(x,t) = 0$ for $x_2\le 0$, so that the equations of motion \eqref{uz}, \eqref{ur}, \eqref{cons-omr} and \eqref{eqchar} are reshaped in the following form,
\begin{equation}
\label{u=}
u(x,t) = \int\!\rmd y\, H(x,y)\, \omega_\eps(y,t)\,,
\end{equation}
\begin{equation}
\label{cons-omr_n}
\omega_\eps(x(t),t) = \frac{x_2(t)}{x_2(0)} \omega_\eps(x(0),0) \,, 
\end{equation}
\begin{equation}
\label{eqchar_n}
\dot x(t) = u(x(t),t) \,,
\end{equation}
where the kernel $H(x,y) = (H_1(x,y),H_2(x,y))$ is given by
\begin{align}
\label{H1}
H_1(x,y) & = \frac{1}{2\pi} \int_0^\pi\!\rmd \theta \, \frac{y_2(y_2 - x_2\cos\theta)}{\big[|x-y|^2 + 2x_2y_2(1-\cos\theta)\big]^{3/2}} \,,
\\ \label{H2}
H_2(x,y) & = \frac{1}{2\pi} \int_0^\pi\!\rmd \theta \, \frac{y_2 (x_1-y_1) \cos\theta}{\big[|x-y|^2 + 2x_2y_2(1-\cos\theta)\big]^{3/2}} \,.
\end{align}

As mentioned in the previous section, we first prove an analogue of Theorem \ref{thm:1} for a ``reduced system'' which describes the motion of a single vortex ring in a suitable external time-dependent vector field. This is the content of Theorem \ref{thm:2} below, whose proof is postponed to the next section.

\subsection{The reduced system}
\label{ssec:2.3}

The reduced system is defined by Eqs.\ \eqref{u=}, \eqref{cons-omr_n}, and, in place of \eqref{eqchar_n}, 
\begin{equation}
\label{eqchar_nF}
\dot x(t) = u(x(t),t) + F^\eps(x(t),t)\,.
\end{equation}
The initial datum $\omega_\eps(x,0)$ and the time dependent vector field $F^\eps$ (possibly depending on $\eps$) are assumed to satisfy the following conditions. 

\begin{assumption}
\label{ass:1}
The function $\omega_\eps(x,0)$ is non-negative (resp.~non-positive) and there is $M>0$ and $a>0$ (resp.~$a<0$) such that 
\begin{equation}
\label{MgammaF}
0 \le |\omega_\eps(x,0)| \le \frac{M}{\eps^2|\log\eps|} \quad \forall\, x\in\bb R^2\,, \qquad |\log\eps|\int\!\rmd y\, \omega_\eps(y,0) =a\,.
\end{equation}
Moreover, there exists $\zeta^* = (z_*,r_*) \in \Pi$ such that
\begin{equation}
\label{initialF}
\Lambda_\eps(0) := \supp\, \omega_\eps(\cdot,0) \subset \Sigma(\zeta^*|\eps)\,.
\end{equation}
We also assume $\eps$ so small to have
\[
\overline{\Sigma(\zeta^*|\eps)} \subset\Pi\,.
\]
Finally, $F^\eps=(F^\eps_1,F^\eps_2) \in C(\bb R^2\times [0,\infty);\bb R^2)$ is globally Lipschitz and enjoys the following properties.
\begin{itemize}
\item[(a)] The vector field $\bs F^\eps = (F^\eps_z,F^\eps_r,F^\eps_\theta) := (F^\eps_1,F^\eps_2,0)$ has zero divergence, i.e., $\partial_{x_1}(x_2 F_1) + \partial_{x_2}(x_2 F_2) = 0$.
\item[(b)] There exist $C_F, L >0$ such that, for any $\eps\in (0,1)$ and $t\ge 0$,
\[
|F^\eps(x,t)| \le \frac{C_F}{|\log\eps|}\,, \quad |F^\eps(x,t) - F^\eps(y,t)| \le \frac{L}{|\log\eps|} |x-y|\qquad \forall\,x,y\in\bb R^2\,.
\]
\end{itemize}
\end{assumption}

\begin{remark}
\label{rem:1}
Under Assumption \ref{ass:1}, the argument in \cite[Appendix]{CS} can be easily adapted to the present context to prove existence and uniqueness of solutions for the reduced problem. Moreover, the support of $\omega_\eps(\cdot,t)$ remains inside the open half-plane $\Pi$ (in particular, in Eq.~\eqref{u=} the integration is actually restricted to $x_2 > 0$). Finally, the following weak formulation holds true,
\begin{equation}
\label{weqF}
\frac{\rmd}{\rmd t} \int\!\rmd x\, \omega_\eps(x,t) f(x,t) = \int\!\rmd x\, \omega_\eps(x,t) \big[ (u+F^\eps) \cdot \nabla f + \partial_t f \big](x,t) \,,
\end{equation}
where $f = f(x,t)$ is any bounded smooth test function. We omit the details.
\end{remark}

\begin{theorem}
\label{thm:2}
Under Assumption \ref{ass:1}, let 
\begin{equation}
\label{zett}
\zeta(t) =  \zeta^* + \begin{pmatrix} v \\ 0 \end{pmatrix} t \qquad \text{with}\quad v =\frac{a}{4\pi r_*}\,.
\end{equation}
Then, for each $R_*\in (0,r_*)$ there are $T_*>0$ and $\eps_*\in (0,1)$ such that for any $t\in[0, T_*]$ the following holds true.

\begin{itemize}
\item[(1)] $\Lambda_\eps(t) := \supp\, \omega_\eps(\cdot,t) \subseteq \Sigma(\zeta(t)|R_*)$ for any $\eps\in (0,\eps_*]$.
\item[(2)] There are $\zeta^\eps(t)\in \Pi$ and $\varrho_\eps>0$ such that
\[
\lim_{\eps\to 0}|\log\eps| \int_{\Sigma(\zeta^\eps(t)|\varrho_\eps)}\!\rmd x\, \omega_\eps(x,t) = a\,,
\]
with
\[
\lim_{\eps\to 0} \varrho_\eps = 0\,, \quad \lim_{\eps\to 0} \zeta^\eps(t) = \zeta(t)\,.
\]
\end{itemize}
\end{theorem}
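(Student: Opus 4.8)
The plan is to control a short list of (nearly) conserved functionals of $\omega_\eps(\cdot,t)$ and read off from them both the motion of the bulk (part~(2)) and the coarse confinement of the filaments (part~(1)). Since the solenoidal condition and Assumption~\ref{ass:1}(a) make both $u$ and $F^\eps$ divergence free in the axisymmetric sense, the flow generated by $u+F^\eps$ preserves the measure $x_2\,\rmd x$; hence, by \eqref{cons-omr_n}, the total mass $m_\eps:=\int\!\rmd x\,\omega_\eps(x,t)=a/|\log\eps|$ is exactly conserved. Testing the weak formulation \eqref{weqF} with $f=x_2^2$ and using the antisymmetry $x_2H_2(x,y)=-\,y_2H_2(y,x)$, which is manifest in \eqref{H2} since the bracket there is symmetric under $x\leftrightarrow y$, the self-interaction cancels and the impulse $\int\!\rmd x\,x_2^2\,\omega_\eps(x,t)$ is conserved up to an error $O(r_*\,m_\eps\,C_F/|\log\eps|)$ produced by $F^\eps$; this pins the radial coordinate of the support near $r_*$. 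Testing with $f=x_1$ and symmetrizing \eqref{H1} isolates the self-induced longitudinal velocity, whose leading order as $\eps\to0$ is exactly $v=a/(4\pi r_*)$ — the logarithmic factor coming from the near-diagonal behaviour of the kernel multiplies $m_\eps\sim a/|\log\eps|$ — the $F^\eps$-contribution amounting only to $O(t/|\log\eps|)$. Together with conservation (again up to controlled $F^\eps$-errors) of the kinetic energy $E_\eps=\frac12\iint\rmd x\,\rmd y\,\omega_\eps(x,t)\,\omega_\eps(y,t)\,G(x,y)$, where $G$ is the axisymmetric Green kernel with logarithmic diagonal singularity, these estimates yield part~(2) along the lines of \cite{BCM00}: the bulk remains concentrated in a disk of radius $\varrho_\eps\to0$ about a center $\zeta^\eps(t)$ with $\zeta^\eps(t)\to\zeta(t)$.

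The heart of the proof is an a~priori bound that replaces the sharp moment-of-inertia estimate available in the planar case, which fails here because $H$ in \eqref{H1}--\eqref{H2} is not quasi-Lipschitz. Differentiating $I(t):=\int\!\rmd x\,|x-\zeta(t)|^2\,\omega_\eps(x,t)$ through \eqref{weqF}, the drift and $F^\eps$ terms are bounded by $C|\log\eps|^{-1}m_\eps^{1/2}\,I(t)^{1/2}$, but the self-interaction $\iint\rmd x\,\rmd y\,\omega_\eps(x,t)\,\omega_\eps(y,t)\,(x-y)\cdot H(x,y)$ does \emph{not} vanish, in contrast with the planar kernel for which $(x-y)\cdot H\equiv0$. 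I would estimate this quadratic form by tying it to the conserved energy $E_\eps$ through the logarithmic diagonal structure of $G$, thereby bounding the growth of $I(t)$. This bound is genuinely weaker than its planar counterpart: it controls the second moment and shows that the mass lying far from $\zeta(t)$ is small, but it does not by itself force the support to be contained in a disk.

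The confinement claim~(1) is then obtained by an iteration, which is the main obstacle. Introduce $\mu(t,\varrho):=\int_{|x-\zeta(t)|>\varrho}\!\rmd x\,\omega_\eps(x,t)$ and, testing \eqref{weqF} with a radial cutoff supported in an annulus about $\zeta(t)$, derive
\[
\frac{\rmd}{\rmd t}\,\mu(t,2\varrho)\ \lesssim\ \frac1\varrho\int_{\varrho<|x-\zeta(t)|<2\varrho}\!\rmd x\,\omega_\eps(x,t)\,\bigl|u(x,t)+F^\eps(x,t)-\dot\zeta(t)\bigr|\,.
\]
In the annulus the relative velocity has three sources: the far field of the concentrated bulk, of size $\lesssim m_\eps/\varrho$; the external field $F^\eps=O(1/|\log\eps|)$ together with the drift mismatch $\dot\zeta-(\text{bulk velocity})$, small by part~(2); and the self-interaction of the escaped mass, the only dangerous term, which I would control using the $L^\infty$ bound \eqref{MgammaF} (propagated in time by \eqref{cons-omr_n}) together with the moment bound above. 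Feeding these into a nested iteration over radii $\varrho_\eps=R_0<R_1<\dots<R_*$ and estimating the mass that crosses $R_{k+1}$ during $[0,t]$ by the flux across $R_k$, one shows this mass decays fast enough to conclude $\mu(t,R_*)=0$, provided $T_*$ is small.

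The crux, and the reason the confinement cannot be propagated to arbitrarily large times, is precisely this last step: because the kernel is genuinely singular, the velocity produced by the escaping filaments on themselves cannot be bounded pointwise, so the only a~priori input is the coarse energy-based moment bound, and making the iterative flux estimate close forces $T_*$ to be small. A secondary technical point is that the extraction of the propagation velocity $v$ with the correct constant from the logarithmically singular kernel must be carried out uniformly in the slowly spreading position of the support.
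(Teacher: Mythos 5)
Your overall architecture --- energy-based concentration, a weak moment-of-inertia bound, an annulus flux iteration that closes only for small times --- is the same as the paper's, but two of the steps as you propose them have genuine gaps. First, the moment of inertia: you plan to differentiate $I(t)$ through \eqref{weqF} and to bound the non-vanishing self-interaction $\iint\rmd x\,\rmd y\,\omega_\eps\,\omega_\eps\,(x-y)\cdot H(x,y)$ ``by tying it to the conserved energy,'' but you give no mechanism for this, and none is apparent: that quadratic form is not controlled pointwise by $E_\eps$. The paper never differentiates $I$ at all. Since $I_\eps(t)=\min_q\int\rmd x\,|x-q|^2\omega_\eps(x,t)$, the concentration estimate \eqref{lem1a} of Lemma \ref{lem:1} (which is where the energy enters, exactly as in \cite{BCM00}, the external field being absorbed through conservation of $E$, $M_0$ and near-conservation of $M_2$ --- your antisymmetry observation for $x_2H_2$ is correct and is what makes $M_2$ quasi-conserved) yields \emph{statically} $I_\eps(t)\le C_2/|\log\eps|^2$ on $[0,T_\eps]$ (Lemma \ref{lem:2}). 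Your route is thus both unsubstantiated and unnecessary.

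Second, centering the iteration at $\zeta(t)$ with the drift mismatch ``small by part (2)'' is circular and quantitatively lossy. Circular, because the extraction of the center's velocity (your part (2)) uses the near-diagonal logarithm of the kernel together with bounds like \eqref{lrest}, which are only valid while the support is known to stay in a region where $x_2\approx r_*$; the paper breaks this circle with the stopping time $T_\eps$ of \eqref{Teps}, proves Lemma \ref{lem:2}, Proposition \ref{prop:1} and Lemmas \ref{lem:4}--\ref{cor:1} on $[0,T_\eps]$ only, and shows $B^\eps(t)\to\zeta(t)$ uniformly \emph{last}, concluding $T_\eps>\bar T_1$ by continuity --- a bootstrap your sketch lacks. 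Quantitatively, the paper centers the cutoff at the center of vorticity $B^\eps(t)$ precisely so that $\int\rmd y'\,y'^\perp\,\tilde\omega_\eps(y',t)=0$: this dipole cancellation bounds the bulk's far field on the annulus via $I_\eps(t)$, giving a per-step coefficient of order $|\log\eps|$, which is exactly compatible with $n=\lfloor|\log\eps|\rfloor$ iterations and the factorial gain producing $m_t(\bar r/8)\le\eps^\ell$ for $t\le\bar T$. Without that cancellation one only has the first absolute moment $\sqrt{m_\eps I_\eps}\sim|\log\eps|^{-3/2}$, the coefficient inflates to order $|\log\eps|^{3/2}$, and $(A t)^n/n!$ no longer beats $\eps^\ell$ at fixed $t$; with your $\zeta(t)$-centering you would have to reinstate the cancellation by hand and control $|B^\eps(t)-\zeta(t)|$ with a rate \emph{inside} the iteration, which is unavailable at that stage. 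A smaller point: the escaped-escaped interaction in the flux needs no $L^\infty$ rearrangement, since the symmetrized cutoff difference cancels the singularity of $K$ ($|\nabla W_{R,h}(x')-\nabla W_{R,h}(y')|\,|K(x'-y')|\le C_W/(2\pi h^2)$); the rearrangement argument with \eqref{MgammaF}, fed by $m_t\le\eps^2$ (Proposition \ref{prop:1} with $\ell=2$), is needed separately in the pointwise velocity bound for the farthest particle (Lemma \ref{lem:4}), a step your sketch conflates with the flux estimate.
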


\subsection{Proof of Theorem \ref{thm:1}}
\label{sec:2.3}

Given $R$ as in the statement of the theorem, let 
\[
T_R' := \sup\left\{t>0\colon \min_{i\ne j}|\zeta^i(s)-\zeta^j(s)|>2R \;\; \forall\, s\in [0,t]\right\}\,,
\]
and
\[
T_R^\eps := \sup\left\{t\in [0,T_R') \colon \Lambda_{i,\eps}(s) \subseteq \Sigma(\zeta^i(s)|R/2)\;\; \forall\, s\in [0,t]\;\;\forall\, i=1,\ldots,N \right\}\,.
\]
Clearly $0<T_R'\le +\infty$ and by continuity, in view of the assumptions \eqref{in} and \eqref{initial}, $T_R^\eps >0$ provided $\eps$ is chosen sufficiently small. Moreover, for any $t\in [0,T_R^\eps ]$, the rings evolve with supports $\Lambda_{i,\eps}(t)$ that remain separated from each other by a distance larger than or equal to $R$, and hence their mutual interaction remains bounded and Lipschitz. More precisely, during the time interval $[0,T_R^\eps]$, the $i$th vortex ring $\omega_{i,\eps}(x,t)$ evolves according to a reduced system, with external field in Eq.~\eqref{eqchar_nF} given by
\begin{equation}
\label{fk1}
F^{i,\eps}(x,t) = \sum_{j: j\ne i} \int\!\rmd y\, \tilde H(x,y)\, \omega_{j,\eps}(y,t)\;,
\end{equation}
where $\tilde H(x,y)$ is any smooth kernel such that $\tilde H (x,y) = H(x,y)$ if $|x-y|\ge R/2$. In view of the explicit form \eqref{H1}, \eqref{H2} of $H$, and the assumption \eqref{ai}, $\tilde H$ can be chosen such that $\bs F^{i,\eps} := (F^{i,\eps}_1,F^{i,\eps}_2,0)$ has zero divergence\footnote{This mollification is obtained by modifying the stream function associated to the field. The existence of such function for axisymmetric flow without swirl is a well known fact, see, e.g., \cite[Section 2]{FrB74}.} and, for some constant $D>0$, any $i,j=1,\ldots, N$, and $t\in [0,T_R^\eps]$,
\[
|F^{i,\eps}(x,t)| \le \frac{D}{|\log\eps|}\,, \quad |F^{i,\eps}(x,t) - F^{j,\eps}(y,t)| \le \frac{D}{|\log\eps|} |x-y|\quad \forall\,x,y\in\bb R^2\,.
\]
Therefore, we can apply Theorem \ref{thm:2} to the evolution of the $i$th vortex ring, with $\zeta^* = \zeta^i$, $a=a_i$, and choosing $R_*< R/2$,  to conclude that there are $T_*>0$ and $\eps_*\in (0,1)$ such that:

\noindent
(1) $\Lambda_{i,\eps}(t) \subseteq \Sigma(\zeta^i(t)|R_*)$ for any $t\in[0, T_*\wedge T_R^\eps]$, $\eps\in (0,\eps_*]$, and $i=1,\ldots,N$, where $\zeta^i(t)$ is defined in \eqref{free_m};

\noindent
(2) there are $\zeta^{i,\eps}(t)\in \Pi$, $i=1,\ldots,N$, and $\varrho_\eps>0$ such that
\[
\lim_{\eps\to 0}|\log\eps| \int_{\Sigma(\zeta^{i,\eps}(t)|\varrho_\eps)}\!\rmd x\, \omega_\eps(x,t) = a_i\,,
\]
with
\[
\lim_{\eps\to 0} \varrho_\eps = 0\,, \quad \lim_{\eps\to 0} \zeta^{i,\eps}(t) = \zeta^i(t)\,.
\]
As $R_*< R/2$, by continuity $T_* < T_R^\eps$, so that the theorem follows with $T_R=T_*$, $\eps_R=\eps_*$, and $R_\eps=\varrho_\eps$.
\qed

\section{Proof of Theorem \ref{thm:2}}
\label{sec:3}

Without loss of generality, we prove the theorem in the case $a=1$, hence
\begin{equation}
\label{MgammaFb}
0 \le \omega_\eps(x,0) \le \frac{M}{\eps^2|\log\eps|} \quad \forall\, x\in\bb R^2\,, \qquad |\log\eps|\int\!\rmd y\, \omega_\eps(y,0) =1\,.
\end{equation}

A preliminary step is a concentration result, which shows that large part of the vorticity remains confined in a disk whose size is infinitesimal as $\eps\to 0$.

\begin{lemma}
\label{lem:1}
Under the hypothesis of Theorem \ref{thm:2}, further assuming Eq.~\eqref{MgammaFb}, for each $T>0$ there are $\eps_1\in (0,1)$, $C_1>0$ and $q^\eps(t)\in\bb R^2$, $t\in [0,T]$, such that
\begin{equation}
\label{lem1a}
|\log\eps| \int_{\Sigma(q^\eps(t)|\sqrt\eps)}\!\rmd x\, \omega_\eps(x,t) \ge 1 - \frac{C_1}{|\log\eps|} \quad \forall\, t\in [0,T] \quad  \forall\, \eps\in (0,\eps_1]
\end{equation}
and
\begin{equation}
\label{lem1b}
|\log\eps| \int_{\Sigma(q^\eps(t)|\eps|\log\eps|)}\!\rmd x\, \omega_\eps(x,t) \ge 1 - \frac{C_1}{\log|\log\eps|} \quad \forall\, t\in [0,T] \quad  \forall\, \eps\in (0,\eps_1]\,.
\end{equation}
\end{lemma}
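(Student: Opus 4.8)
The plan is to deduce the concentration from the (approximate) conservation of the kinetic energy. For an axisymmetric flow without swirl this energy can be written as
\[
W_\eps(t)=\frac12\int\!\rmd x\int\!\rmd y\,G(x,y)\,\omega_\eps(x,t)\,\omega_\eps(y,t),\qquad G(x,y)=-\frac{\sqrt{x_2y_2}}{2\pi}\log|x-y|+\gamma(x,y),
\]
with $\gamma$ bounded on compact subsets of $\Pi\times\Pi$. Writing $m=|\log\eps|^{-1}$ for the total mass (conserved, since \eqref{weqF} with $f\equiv1$ gives $\frac{\rmd}{\rmd t}\int\omega_\eps=0$), I would first evaluate $W_\eps(0)$: on $\supp\,\omega_\eps(\cdot,0)\subset\Sigma(\zeta^*|\eps)$ one has $\sqrt{x_2y_2}=r_*+O(\eps)$, and the sup-norm bound in \eqref{MgammaFb} shows that the self-interaction is dominated by pairs at distance of order $\eps$, giving the \emph{near-maximal} value $W_\eps(0)=\frac{r_*}{4\pi}\bigl(m^2|\log\eps|+O(m^2)\bigr)$. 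Here ``near-maximal'' means that, among all densities with mass $m$ and sup-norm $\le M/(\eps^2|\log\eps|)$, this is essentially the largest admissible energy, the maximiser being the fully packed disk of radius of order $\eps$.

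Next I would show $W_\eps$ is almost conserved, $|W_\eps(t)-W_\eps(0)|=O(m^2)$ on $[0,T]$, which is lower order than $W_\eps\sim m^2|\log\eps|$. Differentiating and using the conservation form $\partial_t\omega_\eps=-\nabla\!\cdot\!\bigl((u+F^\eps)\omega_\eps\bigr)$ gives $\dot W_\eps=\int\nabla\psi\cdot(u+F^\eps)\,\omega_\eps\,\rmd x$ with $\psi=\int G(\cdot,y)\omega_\eps(y)\rmd y$. The self term drops, because $u=\frac1{x_2}\nabla^\perp\psi$ is orthogonal to $\nabla\psi$, so only $F^\eps$ survives. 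Although $\nabla\psi$ is singular (of size $\eps^{-1}$ on the core), the surviving term is tamed by symmetrising in $x,y$: the leading kernel $\frac{x-y}{|x-y|^2}$ is antisymmetric, so after symmetrisation $F^\eps(x)$ is replaced by $F^\eps(x)-F^\eps(y)$, and the Lipschitz bound of Assumption \ref{ass:1}(b) converts the singular factor into the bounded quantity $\le L|\log\eps|^{-1}$; hence $|\dot W_\eps|=O(m^2)$.

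The heart of the proof is the localisation of the bulk, which I would run on the functional $\mathcal E[\omega]=\int\!\int(-\log|x-y|)\,\omega\,\omega$, noting that $W_\eps=\frac{r_*}{4\pi}\mathcal E[\omega_\eps]+O(m^2)$ once the concentrated part sits near radius $r_*$. Fix $t$, let $q^\eps(t)$ maximise $\int_{\Sigma(\cdot|C\eps)}\omega_\eps$ (the near-maximality of $W_\eps$ forces $\int_{\Sigma(q^\eps|C\eps)}\omega_\eps\ge m/2$), and set $\mu(\ell)=\int_{|x-q^\eps(t)|>\ell}\omega_\eps$. Splitting $\omega_\eps=\omega_{\mathrm{in}}+\omega_{\mathrm{an}}+\omega_{\mathrm{out}}$ along $|x-q^\eps|\le\ell$, $\ell<|x-q^\eps|\le2\ell$, $|x-q^\eps|>2\ell$, I bound each of the six interaction terms above by its maximal-concentration value (again capping the logarithm at scale $\eps$ via \eqref{MgammaFb}), so each carries a factor $|\log\eps|$; the sole exception is the inner--outer term, where $|x-y|\ge\ell$ forces $-\log|x-y|\le\log(1/\ell)$. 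The $|\log\eps|$-contributions reassemble into $m^2|\log\eps|-2(m-\mu(\ell))\mu(2\ell)|\log\eps|$, the inner--outer term adds $+2(m-\mu(\ell))\mu(2\ell)\log(1/\ell)$, and comparison with the conserved lower bound $\mathcal E[\omega_\eps]\ge m^2|\log\eps|-Cm^2$ makes all $|\log\eps|$-terms cancel (using $\log(1/\ell)=|\log\eps|-\log(\ell/\eps)$), leaving
\[
(m-\mu(\ell))\,\mu(2\ell)\,\log(\ell/\eps)\le C\,m^2 .
\]
Since $m-\mu(\ell)\ge m/2$, this yields $\mu(2\ell)\le Cm/\log(\ell/\eps)$. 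Choosing $2\ell=\sqrt\eps$, so $\log(\ell/\eps)\sim\tfrac12|\log\eps|$, gives \eqref{lem1a}; choosing $2\ell=\eps|\log\eps|$, so $\log(\ell/\eps)\sim\log|\log\eps|$, gives \eqref{lem1b}.

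I expect the main obstacle to be the justification of $W_\eps\approx\frac{r_*}{4\pi}\mathcal E[\omega_\eps]$, which requires the a priori information that the bulk stays near radius $r_*$ and that the whole support remains in a fixed compact subset of $\Pi$ (so that $x_2$ is bounded away from $0$ and $\infty$, controlling $\gamma$ and the contribution of the escaping low-mass filaments). The support bound I would get from the observation that $\frac{\rmd}{\rmd t}|x-q^\eps|$ only feels the non-singular part of $u$, the singular part being azimuthal about $q^\eps$, whence $|x(t)-q^\eps(t)|\le\eps+Ct$; that $q^\eps_2(t)$ stays near $r_*$ reflects the vanishing self-induced radial velocity of a symmetric ring. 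The near-radius-$r_*$ hypothesis itself I would close by a continuity (bootstrap) argument in $t$: assuming the bulk within $2\sqrt\eps$ of a centre at radius $\approx r_*$ on a maximal interval $[0,\tau)$, the estimate above returns the strictly stronger bound $\mu(\sqrt\eps)\le Cm/|\log\eps|$, so $\tau$ cannot be interior. This coupling --- energy-driven concentration valid only where the bulk is already localised, closed by continuity in time --- is the crux.
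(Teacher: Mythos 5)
Your core mechanism is the right one and is essentially that of the paper, which does not reprove concentration from scratch but invokes \cite[Lemma 2.1]{BCM00} after checking that the relevant conserved quantities survive the external field: the energy and the mass $M_0$ are \emph{exactly} conserved (Assumption \ref{ass:1}(a) makes $\bs F^\eps$ divergence-free, so $\dot E=0$; your symmetrization-plus-Lipschitz argument giving only $|\dot W_\eps|=O(m^2)$, with your notation $m=|\log\eps|^{-1}$, is correct in spirit but unnecessary), while the moment $M_2=\int\rmd x\, x_2^2\,\omega_\eps$ is nearly conserved. Your energy expansion, the near-maximality $W_\eps\sim \frac{r_*}{4\pi}m^2|\log\eps|$, the deficit inequality $(m-\mu(\ell))\,\mu(2\ell)\,\log(\ell/\eps)\le Cm^2$, and the two scale choices $\sqrt\eps$ and $\eps|\log\eps|$ reproduce exactly the scheme of \cite{BCM00} behind \eqref{lem1a}--\eqref{lem1b}.

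The genuine gap is how you control the axisymmetric degeneracies (the factor $\sqrt{x_2y_2}$ and the term $\gamma$, bounded only on compact subsets of $\Pi\times\Pi$). You do this through a support bound $|x(t)-q^\eps(t)|\le\eps+Ct$ for \emph{all} fluid particles, justified by the claim that the singular part of $u$ is azimuthal about $q^\eps(t)$. That claim is false: the singular kernel $K(x-y)$ is azimuthal about the source point $y$, not about $q^\eps(t)$, and integrated against a generic $\omega_\eps$ it produces radial components relative to $q^\eps(t)$ as large as $O(\eps^{-1}|\log\eps|^{-1})$ on the core; moreover $\dot q^\eps(t)$ is itself uncontrolled a priori. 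Uniform-in-$\eps$ confinement of the whole support is precisely the difficult content of Section \ref{sec:3} (Lemma \ref{lem:4}, Proposition \ref{prop:1}, Lemma \ref{cor:1}): in the paper it is established only for a \emph{small} time $T_*$ and it uses Lemma \ref{lem:1} as an input, so your route is circular with respect to the statement and, in any case, could not deliver the lemma for arbitrary $T>0$, which is what is claimed and later needed. The correct, much softer substitute is the one the paper employs: apply \eqref{weqF} with $f=x_2^2$ and Assumption \ref{ass:1}(b) to get $|\dot M_2|\le 2C_F|\log\eps|^{-3/2}\sqrt{M_2}$, hence $M_2\le \const|\log\eps|^{-1}$ on $[0,T]$; Chebyshev's inequality then caps the vorticity mass at large $x_2$ and tames the kernel remainders with no support information whatsoever --- low-mass filaments are allowed to escape, exactly as the statement of the lemma permits. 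With $E$, $M_0$, $M_2$ in hand, your bootstrap ``bulk stays near radius $r_*$'' also becomes unnecessary (in \cite{BCM00} the radial location of the concentration point is itself pinned down by these integrals), which is just as well, since tracking the centre through its velocity is again only a small-time statement in this paper.
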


This result is a corollary of \cite[Lemma 2.1]{BCM00}, where the case without external field is considered. But this lemma remains valid also when $F^\eps \ne 0$, see Appendix \ref{app:b}.

Under the Assumption \ref{ass:1}, given $T>0$ and a radius $\bar r < r_*/2$, we define
\begin{equation}
\label{Teps}
T_\eps := \sup\{ t\in (0,T] \colon \Lambda_\eps(s) \subseteq \Sigma(\zeta(s)|\bar r) \;\;\forall\,s\in [0,t]\}\,,
\end{equation}
with $\zeta(t)$ as in \eqref{zett} (with $a=1$). In view of \eqref{initialF}, if $\eps$ is small enough then $\Lambda_\eps(0) \subset \Sigma(\zeta^*|\bar r) = \Sigma(\zeta(0)|\bar r)$, and hence $T_\eps>0$ by continuity. We next analyze the evolution during the time interval $[0,T_\eps]$.

We denote by $B^\eps(t)$ the center of vorticity of the ring, defined by
\begin{equation}
\label{c.m.}
B^\eps(t) = \frac{\int\! \rmd x\, x\, \omega_\eps(x,t)}{\int\! \rmd x\, \omega_\eps(x,t)} = |\log\eps| \int\! \rmd x\, x\, \omega_\eps(x,t)
\end{equation}
(observe that $M_0(t)=\int\! \rmd x\, \omega_\eps(x,t)$ is a constant of motion, see Eq.~\eqref{m0c} in the Appendix), and by $I_\eps(t)$ the moment of inertia with respect of $B^\eps(t)$, i.e.,
\begin{equation}
\label{moment}
I_\eps(t) = \int\! \rmd x\, |x-B^\eps(t)|^2  \omega_\eps(x,t)\,.
\end{equation}

\begin{lemma}
\label{lem:2}
Fix $T>0$ and let $\eps_1$ be as in Lemma \ref{lem:1}. Fix also $\bar r < r_*/2$ and choose $\eps_2\in (0, \eps_1]$ such that $T_\eps>0$ for any $\eps\in (0,\eps_2]$, with $T_\eps$ as in Eq.~\eqref{Teps}. Then, there exists $C_2>0$ such that
\[
I_\eps(t) \le \frac{C_2}{|\log\eps|^2}  \quad \forall\, t\in [0,T_\eps] \quad  \forall\, \eps\in (0,\eps_2]\,.
\]
\end{lemma}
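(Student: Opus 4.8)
The plan is to avoid any differential inequality in time and to read off the bound directly from the concentration estimate of Lemma~\ref{lem:1}, using the variational characterization of the moment of inertia. Since $B^\eps(t)$ is the center of mass of the measure $\omega_\eps(\cdot,t)\,\rmd x$, a one-line computation (the linear term vanishing by definition of $B^\eps$) gives, for every $c\in\bb R^2$,
\[
\int\!\rmd x\,|x-c|^2\,\omega_\eps(x,t) = I_\eps(t) + \frac{|c-B^\eps(t)|^2}{|\log\eps|} \;\ge\; I_\eps(t)\,.
\]
I would therefore take $c=q^\eps(t)$, the concentration point furnished by Lemma~\ref{lem:1}, so that it suffices to estimate $\int\!\rmd x\,|x-q^\eps(t)|^2\,\omega_\eps(x,t)$, splitting the integral over the disk $\Sigma(q^\eps(t)|\sqrt\eps)$ and its complement.

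On $\Sigma(q^\eps(t)|\sqrt\eps)$ one has $|x-q^\eps(t)|^2\le\eps$, so, recalling the normalization $\int\!\rmd x\,\omega_\eps = |\log\eps|^{-1}$ from \eqref{MgammaFb}, this part contributes at most $\eps|\log\eps|^{-1}$, which is $o(|\log\eps|^{-2})$ and hence negligible. For the complementary region I would use that we work on $[0,T_\eps]$, where by \eqref{Teps} the support obeys $\Lambda_\eps(t)\subseteq\Sigma(\zeta(t)|\bar r)$. Because \eqref{lem1a} forces $\Sigma(q^\eps(t)|\sqrt\eps)$ to carry positive mass for $\eps$ small, this disk meets $\Lambda_\eps(t)$, whence $|q^\eps(t)-\zeta(t)|\le\bar r+\sqrt\eps$ and therefore $|x-q^\eps(t)|\le 3\bar r$ for every $x\in\Lambda_\eps(t)$. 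The mass lying outside $\Sigma(q^\eps(t)|\sqrt\eps)$ is controlled by subtracting \eqref{lem1a} from the total, giving $\int_{|x-q^\eps(t)|>\sqrt\eps}\!\rmd x\,\omega_\eps\le C_1|\log\eps|^{-2}$, so the far contribution is at most $9\bar r^2 C_1|\log\eps|^{-2}$. Summing the two pieces yields $I_\eps(t)\le \eps|\log\eps|^{-1}+9\bar r^2 C_1|\log\eps|^{-2}\le C_2|\log\eps|^{-2}$ for a suitable $C_2$ and all $\eps\le\eps_2$, uniformly in $t\in[0,T_\eps]$.

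In this reading the statement is mostly bookkeeping once Lemma~\ref{lem:1} is in hand: the genuine analytic input—energy conservation—has already been spent to produce the concentration estimate, and the moment of inertia simply inherits it. The one point that needs care, and the place I expect the argument to be most delicate, is the control of the far lever arm: the escaping mass is tiny, of order $|\log\eps|^{-2}$, but it carries the quadratic weight $|x-q^\eps(t)|^2$, so the bound hinges entirely on the a~priori confinement $\Lambda_\eps(t)\subseteq\Sigma(\zeta(t)|\bar r)$ encoded in the definition of $T_\eps$. Without that confinement the filaments could in principle travel far enough to ruin the estimate, which is exactly why the resulting bound is available only on $[0,T_\eps]$ and can feed a bootstrap only on a time interval that is not too large, as anticipated in the Introduction.
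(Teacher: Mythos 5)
Your proposal is correct and follows essentially the same route as the paper: the variational/parallel-axis characterization of $I_\eps(t)$ with $c=q^\eps(t)$, the split over $\Sigma(q^\eps(t)|\sqrt\eps)$ and its complement using \eqref{lem1a}, and the confinement $\Lambda_\eps(t)\subseteq\Sigma(\zeta(t)|\bar r)$ from the definition \eqref{Teps} to bound the lever arm of the escaping mass. Your only (harmless, indeed slightly cleaner) deviation is bounding the lever arm by $|x-q^\eps(t)|\le 2\bar r+\sqrt\eps\le 3\bar r$ directly via the triangle inequality through $\zeta(t)$, whereas the paper bounds $|x|$ and $|q^\eps(t)|$ separately and so also invokes $|\zeta(t)|\le C(|\zeta^*|+T)$, which your version does not need.
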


\begin{proof}
We have, by \eqref{lem1a},
\[
\begin{split}
I_\eps(t) & = \min_{q\in\bb R^2} \int\! \rmd x\, |x-q|^2  \omega_\eps(x,t) \le  \int\! \rmd x\, |x-q_\eps(t)|^2  \omega_\eps(x,t) \\ & = \int_{\Sigma(q^\eps(t)|\sqrt\eps)}\! \rmd x\, |x-q^\eps(t)|^2  \omega_\eps(x,t) + \int_{\Sigma(q^\eps(t)|\sqrt\eps)^\complement}\! \rmd x\, |x-q^\eps(t)|^2  \omega_\eps(x,t) \\ & \le \frac{(\sqrt\eps)^2}{|\log\eps|} + \frac{C_1}{|\log\eps|^2}\max_{x\in\Lambda_\eps(t)} |x-q_\eps(t)|^2\,.
\end{split}
\]
Now, for $t\in [0,T_\eps]$ and $\eps$ small enough,
\[
\max_{x\in\Lambda_\eps(t)} |x-q_\eps(t)|^2 \le 2 \max_{x\in\Lambda_\eps(t)} |x|^2 + 2|q_\eps(t)|^2 \le 2(|\zeta(t)|+\bar r)^2 + 2(|\zeta(t)|+\bar r+\sqrt\eps)^2\,,
\]
where we used that, in view of Eq.~\eqref{lem1a}, $\Sigma(q^\eps(t)|\sqrt\eps)\cap\Lambda_\eps(t)\ne \emptyset$. On the other hand, since $T_\eps\le T$ we also have $|\zeta(t)| \le C(|\zeta^*|+T)$ for some $C>0$ independent of $\eps$. Then, the lemma follows from the above estimates.
\end{proof}

Our next goal is to show that the vorticity outside any small disk centered in $B^\eps$ is indeed extremely small, see Proposition \ref{prop:1} below. We first need a preliminary result which details the structure of the kernel $H(x,y)$ appearing in \eqref{u=}. As we show in Lemma \ref{lem:3} below, whose proof is postponed in Appendix \ref{app:b}, the most singular part of $H(x,y)$ is given by the kernel $K(x-y)$ corresponding to the planar case, i.e.,
\begin{equation}
\label{vel-vor3}
K(x) = \nabla^\perp  G(x)\,, \quad G(x) := - \frac{1}{2\pi} \log|x|\,,
\end{equation}
where $v^\perp :=(v_2,-v_1)$ for $v = (v_1,v_2)$.

\begin{lemma}
\label{lem:3}
There exists $C_0>0$ such that
\begin{equation}
\label{sH}
H(x,y) = K(x-y) + L(x,y) + \mc R(x,y)\,, 
\end{equation}
where $K(\cdot)$ is defined in \eqref{vel-vor3}, 
\begin{equation}
\label{sL}
L(x,y) = \frac{1}{4\pi x_2} \log\frac {1+|x-y|}{|x-y|}\begin{pmatrix} 1 \\ 0 \end{pmatrix}
\end{equation}
and, for any $x,y\in \Pi$,
\begin{equation}
\label{sR}
|\mc R(x,y)| \le C_0\frac{1+x_2+\sqrt{x_2y_2} \big(1+ |\log(x_2y_2)|\big)}{x_2^2}\,.
\end{equation}
\end{lemma}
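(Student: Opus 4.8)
The plan is to analyze the angular integral defining $H(x,y)$ by isolating its singular behavior as $\theta\to 0$ and comparing it with the planar kernel $K(x-y)$. The key observation is that the denominator $[|x-y|^2 + 2x_2y_2(1-\cos\theta)]^{3/2}$ is smallest near $\theta = 0$, where the factor $2x_2y_2(1-\cos\theta)\approx x_2y_2\theta^2$ interpolates between the planar singularity (for $|x-y|$ large relative to $\sqrt{x_2y_2}\,\theta$) and a regularized behavior (for $|x-y|$ small). First I would fix notation by writing $s = |x-y|$, $\beta = x_2 y_2$, and changing variables so that the integral becomes tractable; the standard trick is to substitute near $\theta=0$ and recognize that $\int_0^\pi \rmd\theta\,[s^2 + \beta\theta^2]^{-3/2}$ (after the approximation $1-\cos\theta\approx\theta^2/2$) produces the logarithmic term appearing in $L(x,y)$.

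\smallskip

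For the leading term, the strategy is to expand the numerators of \eqref{H1}, \eqref{H2} to leading order in $\theta$ and in $(x-y)$. In $H_1$, the numerator $y_2(y_2 - x_2\cos\theta) = y_2(y_2 - x_2) + y_2 x_2(1-\cos\theta)$; the first piece $y_2(y_2-x_2) = -y_2(x_2-y_2)$ contributes, upon integration against $[s^2 + \beta(1-\cos\theta)\cdot 2]^{-3/2}$, the planar component $K_1(x-y) = \frac{1}{2\pi}\frac{y_2-x_2}{|x-y|^2}$ (modulo the replacement of $y_2$ by $x_2$ in the prefactor, which generates a remainder controlled by $|x-y|/x_2$). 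The second piece $y_2 x_2(1-\cos\theta)$ produces the logarithmic term of $L(x,y)$ because $(1-\cos\theta)[s^2+2\beta(1-\cos\theta)]^{-3/2}$ integrates to something of order $x_2^{-1}\log(1/s)$ for small $s$. The analysis of $H_2$ is parallel but simpler: the numerator $y_2(x_1-y_1)\cos\theta$ yields the planar component $K_2(x-y) = \frac{1}{2\pi}\frac{x_1-y_1}{|x-y|^2}$ with no logarithmic contribution (consistent with $L$ having zero second component), plus a remainder.

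\smallskip

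The remainder $\mc R(x,y)$ collects three types of error: (i) the discrepancy between the exact angular integral and its small-$\theta$ approximation (replacing $1-\cos\theta$ by $\theta^2/2$ and extending/truncating the range), (ii) the replacement of $y_2$ by $x_2$ in the various prefactors, and (iii) the contribution from $\theta$ bounded away from $0$, where the kernel is smooth. The bound \eqref{sR} is then obtained by carefully tracking the $x_2$-dependence: the factor $x_2^{-2}$ arises from the $x_2 y_2$ in the denominator after integration (dimensionally, two inverse powers of length from $\beta = x_2 y_2$ split symmetrically), while the numerator $1 + x_2 + \sqrt{x_2 y_2}(1 + |\log(x_2 y_2)|)$ records the various regimes—the constant $1$ from the smooth far-field part, the $x_2$ from linear terms in the Taylor expansion, and the $\sqrt{x_2y_2}\,|\log(x_2y_2)|$ from the boundary of the region where the logarithmic regularization kicks in.

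\smallskip

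\textbf{The main obstacle} I expect is the uniform control of $\mc R(x,y)$ across all regimes of the ratio $|x-y|^2/(x_2 y_2)$: when this ratio is large the kernel behaves planarly, when it is small the curvature regularization dominates, and the crossover must be handled so that no spurious blow-up appears as $x_2\to 0$ or $y_2\to 0$. The honest way to do this is to split the $\theta$-integral at a scale $\theta_* \sim s/\sqrt\beta$ and estimate each piece separately, which is precisely the kind of routine-but-delicate computation deferred to Appendix \ref{app:b}; the key is that every error term, after integration in $\theta$, carries at most the stated weight in $x_2, y_2$, with the logarithm appearing only through the exact evaluation of $\int(1-\cos\theta)[\,\cdots]^{-3/2}\rmd\theta$.
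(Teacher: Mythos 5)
Your proposal is correct in outline and follows essentially the same route as the paper: the paper factors out $(x_2y_2)^{3/2}$ to write $H$ in terms of the one-parameter integrals $I_1(a)$, $I_2(a)$ with $a=|x-y|/\sqrt{x_2y_2}$, quotes their explicit expansions (from the appendix of \cite{Mar99}, with remainders $c_i(a)/(1+a)$ uniformly bounded in $a$) rather than re-deriving them via your split at $\theta_*\sim s/\sqrt{\beta}$, and then carries out exactly the bookkeeping you describe --- $K$ from the $1/a^2$ term, $L$ from the $I_2$ logarithm, and six remainder pieces bounded using $\bigl|1-\sqrt{y_2/x_2}\bigr|\le |x-y|/x_2$ and $\bigl|\log\frac{a}{1+a}-\log\frac{|x-y|}{1+|x-y|}\bigr|\le \frac12|\log(x_2y_2)|$. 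Your one loose point --- describing the $I_2$ contribution as of order $x_2^{-1}\log(1/s)$, when the exact integral produces $\log(1/a)$ and trading $a$ for $|x-y|$ in the logarithm costs precisely $\frac12|\log(x_2y_2)|$ --- is harmless, since your final accounting correctly attributes the $\sqrt{x_2y_2}\,|\log(x_2y_2)|/x_2^2$ term in \eqref{sR} to exactly this mismatch.
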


In particular, under the hypothesis of Lemma \ref{lem:2} and recalling $\bar r<r_*/2$ we have, for a suitable $C_3>0$ and any $\eps\in (0,\eps_2]$ and $ t\in [0,T_\eps]$,
\begin{equation}
\label{lrest}
|L(x,y)| \le \frac{1}{2\pi r_*}\log\frac {1+|x-y|}{|x-y|}\,,\;\; |\mc R(x,y)| \le C_3 \quad \forall\, x,y\in \Lambda_\eps(t)\,.
\end{equation}

\begin{proposition}
\label{prop:1}
Let
\[
m_t(R) := \int_{\Sigma(B^\eps(t)|R)^\complement}\!\rmd x\, \omega_\eps(x,t)
\]
denote the amount of vorticity outside the disk $\Sigma(B^\eps(t)|R)$ at time $t$. Fix $T>0$, $\bar r < r_*/2$, and let $\eps_2$ be as in Lemma \ref{lem:2}. Then, for each $\ell>0$ there are $\bar T \in (0,T]$ and $\eps_3 \in (0,\eps_2]$ such that
\begin{equation}
\label{smt}
m_t(\bar r/8) \le \eps^{\ell} \quad \forall\, t\in [0,\bar T\wedge T_\eps] \quad \forall\,\eps\in (0,\eps_3]\,,
\end{equation}
with $T_\eps$ as in Eq.~\eqref{Teps}.
\end{proposition}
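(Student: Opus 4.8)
The plan is to control the mass outside nested disks centered at the center of vorticity $B^\eps(t)$ and to bootstrap the estimate over a family of radii. Throughout I work on the interval $[0,\bar T\wedge T_\eps]$, so that by the definition \eqref{Teps} of $T_\eps$ the support $\Lambda_\eps(t)$ lies in $\Sigma(\zeta(t)|\bar r)$ with $\bar r<r_*/2$; hence every point of the support, and in particular $B^\eps(t)$, has second coordinate in $(r_*/2,3r_*/2)$, so the kernel bounds \eqref{lrest} of Lemma \ref{lem:3} apply and the weight $1/(4\pi x_2)$ in $L(x,y)$ is comparable to $1/r_*$. For a smooth nondecreasing cut-off $W$ with $W(s)=0$ for $s\le\rho$, $W(s)=1$ for $s\ge\rho+\delta$ and $|W'|\le C/\delta$, I set $\mu_t=\int\!\rmd x\,\omega_\eps(x,t)\,W(|x-B^\eps(t)|)$, so that $m_t(\rho+\delta)\le\mu_t\le m_t(\rho)$. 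Differentiating and using the weak formulation \eqref{weqF} (together with the expression $\dot B^\eps=|\log\eps|\int\!\rmd x\,\omega_\eps(u+F^\eps)$ obtained from \eqref{c.m.} and \eqref{weqF}) gives
\[
\dot\mu_t=\int\!\rmd x\,\omega_\eps(x,t)\,W'(|x-B^\eps|)\,\frac{x-B^\eps}{|x-B^\eps|}\cdot\big(u+F^\eps-\dot B^\eps\big)(x,t),
\]
whose integrand is supported in the transition annulus $\{\rho\le|x-B^\eps|\le\rho+\delta\}$.

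Next I bound the velocity factor using the splitting \eqref{sH}. The planar part $K$ is the only genuinely singular contribution, and since $u$ is not Lipschitz it cannot be controlled pointwise; instead I symmetrize it. Writing $g(x):=W'(|x-B^\eps|)\frac{x-B^\eps}{|x-B^\eps|}$ and using $K(x-y)=-K(y-x)$,
\[
\int\!\!\int\!\rmd x\,\rmd y\,\omega_\eps(x)\omega_\eps(y)\,g(x)\cdot K(x-y)=\frac12\int\!\!\int\!\rmd x\,\rmd y\,\omega_\eps(x)\omega_\eps(y)\,\big(g(x)-g(y)\big)\cdot K(x-y),
\]
and $|g(x)-g(y)|\le C\delta^{-2}|x-y|$ together with $|K(x-y)|\le C|x-y|^{-1}$ bounds the integrand by $C\delta^{-2}$ on the region where $x$ or $y$ lies in the annulus; this term is therefore $\le C(\delta^2|\log\eps|)^{-1}m_t(\rho)$. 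The pieces $L$, $\mc R$ and $F^\eps$ are pointwise bounded on the support (by \eqref{lrest} and Assumption \ref{ass:1}(b)); note that on the annulus $u_L$ is only $O(|\log\eps|^{-1})$, because the logarithmic enhancement of $L$ occurs at distances $\sim\eps$ and not at the order-one distance $\rho$. Hence, after subtracting the order-one drift $\dot B^\eps\approx(v,0)$, the relative velocity $u+F^\eps-\dot B^\eps$ in the annulus is $O(1)$, and collecting terms (for $\delta\gtrsim|\log\eps|^{-1}$, so that the singular $K$ contribution is absorbed) yields
\[
\dot\mu_t\le\frac{C}{\delta}\,m_t(\rho).
\]
Here Lemma \ref{lem:2} is used twice: via Chebyshev's inequality it gives the a priori concentration $m_t(\rho)\le I_\eps/\rho^2\le C_2(\rho^2|\log\eps|^2)^{-1}$ that seeds the scheme, and it keeps $B^\eps(t)$ close to $\zeta(t)$ so that $\dot B^\eps\approx(v,0)$ and $B^\eps_2$ does not drift.

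I then iterate over a chain $\rho_0<\rho_1<\dots<\rho_n=\bar r/8$, equally spaced by $\delta=(\bar r/8-\rho_0)/n$ with $\rho_0$ a fixed fraction of $\bar r$. For each $k$ the cut-off computation gives (rigorously for the smoothed functionals) $\frac{\rmd}{\rmd t}m_t(\rho_k)\le C\delta^{-1}m_t(\rho_{k-1})$, and since \eqref{initialF} forces the initial vorticity into $\Sigma(\zeta^*|\eps)$ with $\eps\ll\rho_0$, all the initial masses $m_0(\rho_k)$ vanish. Integrating the chain from the seed $\sup_{s\le t}m_s(\rho_0)\le C|\log\eps|^{-2}$ gives
\[
m_t(\bar r/8)\le\Big(\frac{C}{\delta}\Big)^{n}\frac{t^{n}}{n!}\,\sup_{s\le t}m_s(\rho_0)\le\Big(\frac{Cet}{\bar r/8-\rho_0}\Big)^{n}\frac{C}{|\log\eps|^2}.
\]
Choosing $\bar T<(\bar r/8-\rho_0)/(Ce)$ makes $\theta:=Cet/(\bar r/8-\rho_0)<1$ for $t\le\bar T$, and taking $n=n_\eps\sim|\log\eps|$ (so that $\delta\sim|\log\eps|^{-1}\gg\eps$, keeping the kernel estimates valid) forces $\theta^{n_\eps}\le\eps^{\ell}$, which is \eqref{smt}.

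The main obstacle is precisely the non-Lipschitz character of the velocity: the planar kernel $K$ rules out any pointwise control of $u$, so the self-interaction must be treated through the antisymmetrization above, with the moment-of-inertia estimate of Lemma \ref{lem:2} supplying the missing concentration. The second difficulty, and the reason the conclusion is confined to a short time interval, is that after removing the center drift the residual velocity in the annulus is only $O(1)$ rather than $o(1)$: the geometric factor $\theta^{n}$ defeats every power of $\eps$ only while $\theta<1$, i.e. for $t$ below the threshold $\bar T$, which cannot be taken arbitrarily large.
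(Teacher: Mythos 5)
Your proposal is correct and follows the paper's architecture: mollified masses outside disks centered at $B^\eps(t)$, the decomposition $H=K+L+\mc R$ of Lemma \ref{lem:3}, antisymmetrization of the planar kernel against the Lipschitz gradient of the cut-off, pointwise control of the remaining pieces, and an iteration over $\sim|\log\eps|$ nested radii that produces a geometric factor beating every power of $\eps$ on a short time interval. Two sub-steps genuinely differ. First, for the $K$ self-interaction the paper does not stop at the crude bound $|g(x)-g(y)|\,|K(x-y)|\le C\delta^{-2}$: it splits further (the terms $A_1',A_1'',A_1'''$), uses that the first moment of $\omega_\eps$ relative to $B^\eps(t)$ vanishes by \eqref{c.m.}, and extracts the moment of inertia, obtaining $|A_1|\le C I_\eps(t) m_t(R)/(Rh^3)$. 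Your cruder bound gives $C m_t(\rho)/(\delta^2|\log\eps|)$, and since with $\delta\sim|\log\eps|^{-1}$ both estimates are $O(|\log\eps|)\,m_t$, your simplification is legitimate for this proposition (the paper's finer machinery earns its keep later, in Lemma \ref{lem:4}, where a pointwise velocity bound at the outermost particle is needed). Second, you seed the iteration with Chebyshev plus Lemma \ref{lem:2}, $m_t(\rho_0)\le C_2/(\rho_0^2|\log\eps|^2)$, at a fixed inner radius, while the paper iterates all the way down to $\bar r/16$ with vanishing initial masses and closes with the trivial bound $\mu_s\le|\log\eps|^{-1}$; both yield the same conclusion since the geometric factor $\theta^n$ does all the work.

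Two imprecisions should be repaired, though neither damages the structure. (i) Your parenthetical claim that $u_L=O(|\log\eps|^{-1})$ on the annulus is not justified pointwise: the kernel bound \eqref{lrest} is logarithmically divergent as $y\to x$, and filaments of vorticity may sit arbitrarily close to a point $x$ of the annulus, so the $L$-contribution is only $O(1)$ there. That $O(1)$ bound is what you actually use, but it requires the rearrangement argument giving \eqref{axt} (the paper's constant $C_4$), not \eqref{lrest} alone. (ii) In the final step, fixing $\bar T$ so that $\theta<1$ and taking $n_\eps\sim|\log\eps|$ yields $\theta^{n_\eps}\approx\eps^{\log(1/\theta)}$, which is $\le\eps^\ell$ only if $\theta\le e^{-\ell}$; so $\bar T$ must shrink with $\ell$ (as the statement of Proposition \ref{prop:1} indeed permits, and as happens in the paper's choice of $\bar T$ from \eqref{mass 15'}). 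Alternatively one may enlarge $n_\eps$ to $c_\ell|\log\eps|$, but then the per-step constant grows with $c_\ell$ through the $\delta^{-2}|\log\eps|^{-1}$ term, again forcing $\bar T$ to depend on $\ell$. As written, your conclusion is off by a constant in the exponent; the fix is one line.
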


\begin{proof}
Given $R\ge 2h>0$, let $x\mapsto W_{R,h}(x)$, $x\in \bb R^2$, be a non-negative smooth function, depending only on $|x|$, such that
\begin{equation}
\label{W1}
W_{R,h}(x) = \begin{cases} 1 & \text{if $|x|\le R$}, \\ 0 & \text{if $|x|\ge R+h$}, \end{cases}
\end{equation}
and, for some $C_W>0$,
\begin{equation}
\label{W2}
|\nabla W_{R,h}(x)| < \frac{C_W}{h}\,,
\end{equation}
\begin{equation}
\label{W3}
|\nabla W_{R,h}(x)-\nabla W_{R,h}(x')| < \frac{C_W}{h^2}\,|x-x'|\,. 
\end{equation}

We define the quantity
\begin{equation}
\label{mass 1}
\mu_t(R,h) = \int\! \rmd x \, \big[1-W_{R,h}(x-B^\eps(t))\big]\, \omega_\eps (x,t)\,,
\end{equation}
which is a mollified version of $m_t$, satisfying
\begin{equation}
\label{2mass 3}
\mu_t(R,h) \le m_t(R) \le \mu_t(R-h,h)\,.
\end{equation}
In particular, it is enough to prove \eqref{smt} with $\mu_t$ instead of $m_t$. 

To this purpose, we study the time derivative of $\mu_t(h)$. By applying \eqref{weqF} with test function $f(x,t) = 1-W_{R,h}(x-B^\eps(t))$ we have, 
\[
\frac{\rmd}{\rmd t} \mu_t(R,h) = - \int\! \rmd x\, \nabla W_{R,h}(x-B^\eps(t)) \cdot [u(x,t)+ F^\eps(x,t) - \dot B^\eps(t)]\,\omega_\eps(x,t)\,.
\]
We now observe that the flow $x(0) \mapsto x(t)$ induced by \eqref{eqchar_nF} preserves the measure $x_2 \rmd x$, so that  Eq.~\eqref{cons-omr_ni} implies $\omega_\eps(x(t),t)\, \rmd x(t) = \omega_\eps(x(0),0)\, \rmd x(0)$ (see also Eq.~\eqref{m0c} in the Appendix). Therefore, from Eqs.~\eqref{c.m.} and \eqref{eqchar_nF},
\begin{equation}
\label{bpunto}
\begin{split}
\dot B^\eps(t) & = |\log\eps|\frac{\rmd}{\rmd t} \int\!\rmd x\, x\,\omega_\eps(x,t) =  |\log\eps| \int\!\rmd x\, \omega_\eps(x,t) \, (u+F^\eps)(x,t) \\ & =  |\log\eps|\int\!\rmd x\, \omega_\eps(x,t) \, F^\eps (x,t) \\ & \quad +  |\log\eps|\int\!\rmd x\, \omega_\eps(x,t) \int\!\rmd y\,  [L(x,y) + \mc R(x,y)]\omega_\eps(y,t)\,,
\end{split}
\end{equation}
where in the last equality we used Eqs.~\eqref{u=}, \eqref{sH} and that, since $K(\cdot)$ is an odd funtion,
\[
\int\!\rmd x\, \omega_\eps(x,t) \int\!\rmd y\, K(x-y) \omega_\eps(y,t) = 0\,.
\]
Applying again Eqs.~\eqref{u=} and \eqref{sH} we thus conclude that
\begin{equation}
\label{mass 4}
\frac{\rmd}{\rmd t} \mu_t(R,h) =  - (A_1 + A_2 + A_3 + A_4) \,,
\end{equation}
with
\begin{equation*}
\begin{split}
A_1 & = \int\! \rmd x\, \nabla W_{R,h}(x-B^\eps(t)) \cdot \int\!\rmd y \, K(x-y)\, \omega_\eps(y,t)\, \omega_\eps(x,t)  \\ & = \frac 12 \int\! \rmd x \! \int\! \rmd y\, [\nabla W_{R,h}(x-B^\eps(t)) - \nabla W_{R,h}(y-B^\eps(t))] \\ & \qquad  \cdot K(x-y) \, \omega_\eps(x,t)\,  \omega_\eps(y,t) \\ A_2 & = |\log\eps|  \int\! \rmd x\, \nabla W_{R,h}(x-B^\eps(t)) \cdot \int\!\rmd y \,[F^\eps(x,t)-F^\eps(y,t)]\, \omega_\eps(y,t)\, \omega_\eps(x,t)\,, \\ A_3 & = \int\! \rmd x\, \nabla W_{R,h}(x-B^\eps(t)) \cdot \int\!\rmd y \, L(x,y)\, \omega_\eps(y,t)\, \omega_\eps(x,t) \\ & \;\; - |\log\eps| \int\! \rmd x\, \nabla W_{R,h}(x-B^\eps(t)) \cdot \int\! \rmd z\, \int\! \rmd y\, L(z,y) \omega_\eps(z,t)\,  \omega_\eps(y,t)\, \omega_\eps(x,t)\,, \\ A_4 & = \int\! \rmd x\, \nabla W_{R,h}(x-B^\eps(t)) \cdot \int\!\rmd y \, \mc R(x,y)\, \omega_\eps(y,t)\, \omega_\eps(x,t) \\ & \;\; - |\log\eps| \int\! \rmd x\, \nabla W_{R,h}(x-B^\eps(t)) \cdot \int\! \rmd z\, \int\! \rmd y\, \mc R(z,y)\,  \omega_\eps(z,t)\,  \omega_\eps(y,t) \,  \omega_\eps(x,t)\,,
\end{split}
\end{equation*}
where the second expression of $A_1$ is due to the antisymmetry of $K$.

Concerning $A_1$, we introduce the new variables $x'=x-B^\eps(t)$, $y'=y-B^\eps(t)$, define $\tilde\omega_\eps(z,t) := \omega_\eps(z+B^\eps(t),t)$, and let
\[
f(x',y') = \frac 12 \tilde\omega_\eps(x',t)\, \tilde\omega_\eps(y',t) \, [\nabla W_{R,h}(x')-\nabla W_{R,h}(y')] \cdot K(x'-y') \,,
\]
so that $A_1 = \int\!\rmd x' \! \int\!\rmd y'\,f(x',y')$. We observe that $f(x',y')$ is a symmetric function of $x'$ and $y'$ and that, by \eqref{W1}, a necessary condition to be different from zero is if either $|x'|\ge R$ or $|y'|\ge R$. Therefore, 
\begin{equation*}
\begin{split}
A_1  &= \bigg[ \int_{|x'| > R}\!\rmd x' \! \int\!\rmd y' + \int\!\rmd x' \! \int_{|y'| > R}\!\rmd y' -  \int_{|x'| > h}\!\rmd x' \! \int_{|y'| > R}\!\rmd y'\bigg]f(x',y') \\ & = 2 \int_{|x'| > R}\!\rmd x' \! \int\!\rmd y'\,f(x',y')  -  \int_{|x'| > R}\!\rmd x' \! \int_{|y'| > R}\!\rmd y'\,f(x',y') \\ & = A_1' + A_1'' + A_1'''\,,
\end{split}
\end{equation*}
with 
\begin{equation*}
\begin{split}
A_1' & = 2 \int_{|x'| > R}\!\rmd x' \! \int_{|y'| \le R-h}\!\rmd y'\,f(x',y') \,, \\ A_1''&  = 2 \int_{|x'| > R}\!\rmd x' \! \int_{|y'| > R-h}\!\rmd y'\,f(x',y')\,, \\ A_1''' & = -  \int_{|x'| > R}\!\rmd x' \! \int_{|y'| > R}\!\rmd y'\,f(x',y')\,.
\end{split}
\end{equation*}
By the assumptions on $W_{R,h}$, we have $\nabla W_{R,h}(z) = \eta_h(|z|) z/|z|$ with $\eta_h(|z|) =0$ for $|z| \le R$. In particular, $\nabla W_{R,h}(y') = 0$ for $|y'| \le R-h$, hence
\[
A_1' =  \int_{|x'| > R}\!\rmd x' \, \tilde\omega_\eps(x',t) \eta_h(|x'|) \,\frac{x'}{|x'|} \cdot  \int_{|y'| \le R-h}\!\rmd y'\, K(x'-y') \, \tilde\omega_\eps(y',t)\,.
\]
In view of  \eqref{W2}, $|\eta_h(|z|)| \le C_W/h$, so that 
\begin{equation}
\label{a1'}
|A_1'| \le \frac{C_W}{h} m_t(R) \sup_{|x'| > R} |H_1(x')|\,,
\end{equation}
with
\[
H_1(x') = \frac{x'}{|x'|}\cdot  \int_{|y'| \le R-h}\!\rmd y'\, K(x'-y') \, \tilde\omega_\eps(y',t) \,.
\]
Now, recalling \eqref{vel-vor3} and using that $x'\cdot (x'-y')^\perp=-x'\cdot y'^\perp$, we get,
\begin{equation}
\label{in H_11}
H_1(x') = \frac{1}{2\pi} \int_{|y'|\leq R-h}\! \rmd y'\, \frac{x'\cdot y'^\perp}{|x'||x'-y'|^2}\,  \tilde\omega_\eps(y',t) \,.
\end{equation}
By \eqref{c.m.}, $\int\! \rmd y'\,  y'^\perp\,  \tilde\omega_\eps(y',t) = 0$, so that
\begin{equation}
\label{in H_13}
H_1(x')  = H_1'(x')-H_1''(x')\,, 
\end{equation}
where
\begin{eqnarray*}
	&& H_1'(x') = \frac{1}{2\pi}  \int_{|y'|\le R-h}\! \rmd y'\, \frac {x'\cdot y'^\perp}{|x'|}\, \frac {y'\cdot (2x'-y')}{|x'-y'|^2 \ |x'|^2} \,  \tilde\omega_\eps(y',t) \,, \\ && H_1''(x')= \frac{1}{2\pi} \int_{|y'|> R-h}\! \rmd y'\, \frac{x'\cdot y'^\perp}{|x'|^3}\,  \tilde\omega_\eps(y',t) \,.
\end{eqnarray*}
We notice that if $|x'| > R$ then $|y'| \le R-h$ implies $|x'-y'|\ge h$ and $|2x'-y'|\le |x'-y'|+|x'|$. Therefore, for any $|x'| > R \ge 2h$,
\[
\begin{split}
|H_1'(x')|& \le \frac{1}{2\pi}\bigg[\frac{1}{|x'|^2h} + \frac{1}{|x'|h^2} \bigg]  \int_{|y'|\leq R-h} \! \rmd y'\, |y'|^2 \,  \tilde\omega_\eps(y',t) \\ & \le \frac{I_\eps(t)}{2\pi}\bigg[\frac{1}{R^2h} + \frac{1}{Rh^2}\bigg] \le \frac{3I_\eps(t)}{4\pi Rh^2}\,.
\end{split}
\]
To bound $H_1''(x')$, by Chebyshev's inequality, for any $|x'| > R \ge 2h$ we have,
\[
|H_1''(x)| \le \frac{1}{2\pi |x'|^2} \int_{|y'|> R-h}\! \rmd y'\, |y'| \tilde\omega_\eps(y',t) \le \frac{I_\eps(t)}{2\pi R^2(R-h)} \le \frac{I_\eps(t)}{2 \pi R^2h} \,.
\]
From Eqs.~\eqref{a1'} and \eqref{in H_13}, the previous estimates, and $R\ge 2h$, we conclude that
\begin{equation}
\label{H_14b}
|A_1'| \le \frac{5C_WI_\eps(t)}{4\pi Rh^3} m_t(R)\,.
\end{equation}

Now, by \eqref{W3} and then applying the Chebyshev's inequality and again $R\ge 2h$,
\begin{equation*}
\begin{split}
|A_1''| + |A_1'''| & \le \frac{C_W}{\pi h^2} \int_{|x'| \ge R}\!\rmd x' \! \int_{|y'| \ge R-h}\!\rmd y'\,\tilde\omega_\eps(y',t) \,  \tilde\omega_\eps(x',t)  \\ & = \frac{C_W}{\pi h^2}m_t(R)   \int_{|y'| \ge R-h}\!\rmd y'\, \tilde\omega_\eps(y',t)  \le \frac{4C_W I_\eps(t)}{\pi R^2h^2} m_t(R)\,.
\end{split}
\end{equation*}
In conclusion, recalling $R\ge 2h$, 
\begin{equation}
\label{a1s}
|A_1| \le  \frac{13 C_W I_\eps(t)}{4\pi R h^3} m_t(R)\,.
\end{equation}

Concerning $A_2$, we observe that by \eqref{W1} the integrand is different from zero only if $R\le |x-B^\eps(t)|\le R+h$. Therefore, by item (b) in Assumption \ref{ass:1} and \eqref{W2} we have,  using again the variables $x'=x-B^\eps(t)$, $y'=y-B^\eps(t)$,
\[
\begin{split}
|A_2| & \le \frac{2C_W C_F}{h} \int_{|x'|\ge R}\!\rmd x'  \tilde\omega_\eps(x',t)  \int_{|y'|> R}\!\rmd y'\, \tilde\omega_\eps(y',t) \\ &\quad + \frac{C_WL}{h} \int_{R \le |x'|\le R+h}\!\rmd x'  \tilde\omega_\eps(x',t) \int_{|y'| \le R}\!\rmd y'\,|x'- y'| \,  \tilde\omega_\eps(y',t) \,.
\end{split}
\]
Since $|x'-y'| \le 2R+h$ in the domain on integration of the last integral and using the Chebyshev's inequality in the first one we get,
\begin{equation}
\label{a2s}
|A_2| \le \frac{2C_W C_F I_\eps(t)}{R^2h} m_t(R) +  \frac{C_WL}{|\log\eps|} \bigg(1+\frac{2R}h\bigg) m_t(R)\,.
\end{equation}

To bound $A_3$ and $A_4$, we now restrict to the case of interest, $t\in [0,T_\eps]$ and $\eps\in (0,\eps_2]$, so that the kernels $L$ and $\mc R$ can be bounded as in \eqref{lrest}. Then, using also $|\log\eps|\int\!\rmd z\, \omega_\eps(z,t)=1$,
\begin{equation}
\label{a3s}
|A_3| + |A_4| \le \frac{2C_W}h  \Big(\sup_x \alpha(x,t) + C_3 \Big) m_t(R)\,,
\end{equation}
where
\begin{equation}
\label{axt1}
\alpha(x,t) = \frac{1}{r_*} \int\!\rmd y \, \log\frac{1+|x-y|}{|x-y|}\, \omega_\eps(y,t) \,.
\end{equation}
To bound $\alpha(x,t)$ , we observe that the integrand is monotonically unbounded as $y\to x$, and so the maximum of the integral is obtained when we rearrange the vorticity mass as close as possible to the singularity. Therefore, recalling \eqref{MgammaF},
\[
\begin{split}
\alpha(x,t) & \le \frac{M}{2\eps^2|\log\eps|} \frac{1}{r_*}\int_0^{\bar\rho}\!\rmd \rho\, \rho \, \log\frac{1+\rho}{\rho} \\ & =  \frac{M}{2\eps^2|\log\eps|r_*}  \bigg\{\frac{\bar\rho^2}{2} \log\frac{1+\bar\rho}{\bar\rho} - \frac 12 \int_0^{\bar\rho}\!\rmd \rho\, \frac{\rho}{1+\rho} \bigg\}\,,
\end{split}
\]
with $\bar\rho$ such that $\pi\bar\rho^2 M/(\eps^2|\log\eps|)= 1/|\log\eps|$, and hence, for some $C_4>0$
\begin{equation}
\label{axt}
\sup_x \alpha(x,t) \le C_4\,.
\end{equation}
From \eqref{a1s}, \eqref{a2s}, \eqref{a3s}, and Lemma \ref{lem:2} we deduce that
\begin{equation}
\label{2mass 4''}
\frac{\rmd}{\rmd t} \mu_t(R,h) \le A_\eps(R,h) m_t(h)\quad \forall\, t\in [0,T_\eps] \quad  \forall\, \eps\in (0,\eps_2]\,,
\end{equation}
where, for some $C_5>0$ and any $R>2h$,
\begin{equation}
\label{mass 4bis}
A_\eps(R,h) = \frac{C_5}{h} \bigg( \frac{1}{|\log\eps|^2Rh^2} + \frac{1}{|\log\eps|^2R^2} + \frac{R}{|\log\eps|} + 1 \bigg)\,.
\end{equation}
Therefore, by \eqref{2mass 3} and \eqref{2mass 4''},
\begin{equation}
\label{mass 14'}
\mu_t(R,h) \le \mu_0(h) + A_\eps(R,h) \int_{0}^t \rmd s\, \mu_s(R-h,h) \quad \forall\, t\in [0,T_\eps] \quad  \forall\, \eps\in (0,\eps_2]\,.
\end{equation}
We iterate the last inequality $n = \lfloor|\log\eps|\rfloor$ times,\footnote{$\lfloor a\rfloor$ denotes the integer part of the positive number $a$.} from $R_0=\bar r/8-h$ to $R_n = \bar r/8 -(n+1)h = \bar r/16$. Since $h = \bar r/(16n) $ and $R_j\in [\bar r/16, \bar r/8]$, from the explicit expression \eqref{mass 4bis} it is readily seen that there is $A_*>0$ such that $A_\eps(R_j,h) \le A_*n/\bar r$ for any $j=0,\ldots,n$ and $\eps\in (0,\eps_2]$. Therefore, for any $\eps\in (0,\eps_2]$ and $t\in [0,T_\eps]$, 
\[
\begin{split}
\mu_t(\bar r/8-h,h) & \le \mu_0(\bar r/8-h,h) + \sum_{j=1}^n \mu_0(R_j,h) \frac{(A_*nt/\bar r)^j}{j!} \\ & \quad + \frac{(A_*n/\bar r)^{n+1}}{n!} \int_0^t\!\rmd s\,  (t-s)^n\mu_s(R_{n+1},h) \,.
\end{split}
\]
Since $\Lambda_\eps(0) \subset \Sigma(\zeta^*|\eps)$, we can determine $\eps_3\in (0,\eps_2]$ such that $\mu_0(R_j,h)=0$ for any $j=0,\ldots,n$, so that, for any $t\in [0,T_\eps]$ and $\eps\in (0,\eps_3]$,
\begin{equation}
\label{mass 15'}
\mu_t(\bar r/8-h,h) \le \frac{(A_*n/\bar r)^{n+1}}{n!} \int_0^t\!\rmd s\,  (t-s)^n\mu_s(R_{n+1},h) \le  \frac{(A_*nt/\bar r)^{n+1}}{|\log\eps| (n+1)!}\,,
\end{equation}
where the obvious estimate $\mu_s(R_{n+1},h) \le |\log\eps|^{-1}$ has been used in the last inequality. In conclusion, using also \eqref{2mass 3}, for suitable constants $C',C''>0$,
\[
m_t(\bar r/8) \le \mu_t(\bar r/8 -h,h) \le C' \bigg(\frac{C''t}{\bar r}\bigg)^{\lfloor|\log\eps|\rfloor} \;\; \forall\, t\in [0,T_\eps] \quad  \forall\, \eps\in (0,\eps_2]\,,
\]
which implies the bound \eqref{smt} for a suitable choice of $\bar T$ and $\eps_3$.
\end{proof}

To show that the support of vorticity remains bounded as $\eps\to 0$, we need to evaluate the force acting on the fluid particles furthest from the center of vorticity.

\begin{lemma}
\label{lem:4}
Fix $T>0$, $\bar r < r_*/2$, and let $T_\eps$ be as in definition \eqref{Teps} and $\eps_2$ be as in Lemma \ref{lem:2}. Recall $\Lambda_\eps(t)=\supp\omega_\eps(\cdot,t)$ and define
\begin{equation}
\label{Rt}
R_t:= \max\{|x-B^\eps(t)|\colon x\in \Lambda_\eps(t)\}\,.
\end{equation}
Given $\eps\in (0,\eps_2]$ and $x_0\in\Lambda_\eps(0)$, let $x(t)$ be the solution to \eqref{eqchar_nF} with initial condition $x(0) = x_0$ and suppose at time $t\in (0,T_\eps)$ it happens that 
\begin{equation}
\label{hstimv}
|x(x_0,t)-B^\eps(t)| = R_t\,.
\end{equation}
Then, at this time $t$, 
\begin{equation}
\label{stimv}
\frac{\rmd}{\rmd t} |x(t)- B^\eps(t)| \le 2L R_t + \frac{5C_2}{\pi|\log\eps|^2 R_t^3} + C_6 + \sqrt{\frac{M m_t(R_t/2)}{\pi\eps^2|\log\eps|}}\,,
\end{equation}
with $M$ and $L$ as in Assumption \ref{ass:1}, $C_2$ as in Lemma \ref{lem:2}, and $C_6 := 2(C_3+C_4)$, with $C_3$ and $C_4$ given in Eq.s~\eqref{lrest} and \eqref{axt} respectively.
\end{lemma}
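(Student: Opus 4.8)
The plan is to differentiate $t\mapsto|x(t)-B^\eps(t)|$ and project the relative velocity onto the outward unit vector $n:=(x(t)-B^\eps(t))/R_t$, which is well defined at the time $t$ at which \eqref{hstimv} holds. Abbreviating $b:=x(t)$ and inserting \eqref{eqchar_nF}, the decomposition \eqref{sH} into $u(b,t)=\int\rmd y\,H(b,y)\,\omega_\eps(y,t)$, and the expression \eqref{bpunto} for $\dot B^\eps$, I obtain
\[
\frac{\rmd}{\rmd t}|x(t)-B^\eps(t)| = n\cdot\big[(u+F^\eps)(b,t)-\dot B^\eps(t)\big],
\]
and I would organize the right-hand side into four pieces according to the kernels $K$, $F^\eps$, $L$, $\mc R$. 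Since the odd part $\int\rmd y\,K(b-y)\,\omega_\eps$ has already been subtracted in the formula \eqref{bpunto} for $\dot B^\eps$, the $K$-piece is simply $n\cdot\int\rmd y\,K(b-y)\,\omega_\eps(y,t)$, whereas the $F^\eps$-, $L$-, and $\mc R$-pieces appear as the value at $b$ minus the corresponding $\omega_\eps$-weighted average arising from $\dot B^\eps$.

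The three regular pieces are routine. Using $|\log\eps|\int\rmd y\,\omega_\eps=1$, the $F^\eps$-piece equals $|\log\eps|\int\rmd y\,n\cdot[F^\eps(b,t)-F^\eps(y,t)]\,\omega_\eps(y,t)$; the Lipschitz bound in Assumption \ref{ass:1}(b) together with $|b-y|\le 2R_t$ for $y\in\Lambda_\eps(t)$ controls this by $2LR_t$ (indeed by $2LR_t/|\log\eps|$). For the $L$- and $\mc R$-pieces I would bound both the value at $b$ and the weighted average by the a priori estimates \eqref{lrest} and \eqref{axt}: on $\Lambda_\eps(t)$ the kernel $\mc R$ is bounded by $C_3$, while $L$ contributes through $\alpha(\cdot,t)\le C_4$, so together they give at most $C_6=2(C_3+C_4)$.

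The heart of the estimate is the $K$-piece, and this is where the absence of a Lipschitz velocity field makes the argument delicate. Using $K=\nabla^\perp G$ and writing $b-y=R_t\,n-(y-B^\eps)$, the identity $n\cdot n^\perp=0$ gives
\[
n\cdot K(b-y)=\frac{1}{2\pi}\,\frac{n\cdot(y-B^\eps)^\perp}{|b-y|^2}.
\]
I then split the integral at $|y-B^\eps|=R_t/2$. On the inner region I exploit the center-of-vorticity identity $\int\rmd y\,(y-B^\eps)^\perp\omega_\eps=0$ (from \eqref{c.m.}) to subtract the constant $1/R_t^2$; since there $|b-y|\ge R_t/2$ and $|R_t^2-|b-y|^2|\le\frac52 R_t|y-B^\eps|$, the integrand is controlled by a multiple of $|y-B^\eps|^2/R_t^3$, and integration yields $\frac{5I_\eps(t)}{\pi R_t^3}\le\frac{5C_2}{\pi|\log\eps|^2R_t^3}$ by Lemma \ref{lem:2}. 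On the outer region the point $b$ may lie close to the support, so I do not subtract; instead I use the integrable bound $|n\cdot K(b-y)|\le\frac{1}{2\pi|b-y|}$ and estimate $\int_{|y-B^\eps|>R_t/2}\frac{\omega_\eps(y,t)}{|b-y|}\rmd y$ by rearrangement. Since $\omega_\eps\le M/(\eps^2|\log\eps|)$ and the mass in this region is $m_t(R_t/2)$, the integral is maximized by packing the mass in a disk about $b$ of radius $\rho$ with $\pi\rho^2 M/(\eps^2|\log\eps|)=m_t(R_t/2)$, giving exactly $\sqrt{M m_t(R_t/2)/(\pi\eps^2|\log\eps|)}$. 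The mismatch between splitting before and after the subtraction produces only a correction of order $m_t(R_t/2)/R_t$, which is dominated by the square-root term for $\eps$ small.

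The main obstacle is precisely this $K$-piece. Because the kernel is only of order $|b-y|^{-1}$ and not Lipschitz, neither a crude pointwise bound nor a pure cancellation argument suffices: the bulk of the vorticity must be handled through the symmetry of the center of vorticity combined with the moment-of-inertia bound of Lemma \ref{lem:2} (which produces the dangerous $R_t^{-3}$, made harmless by the factor $|\log\eps|^{-2}$), while the small amount of vorticity that may have escaped near the outermost particle must be absorbed by the $L^\infty$-rearrangement bound, producing the square-root term with its $\eps^{-1}$ factor. This latter term is useful only because $m_t(R_t/2)$ is super-polynomially small by Proposition \ref{prop:1}; balancing the two mechanisms is exactly what will force the admissible time interval to remain finite.
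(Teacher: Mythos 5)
Your proposal is correct and follows essentially the same route as the paper: the same decomposition of $u+F^\eps-\dot B^\eps$ into the $K$-, $F^\eps$-, $L$- and $\mc R$-pieces (with the $K$-contribution to $\dot B^\eps$ vanishing by antisymmetry as in \eqref{bpunto}), the same $C_6$ bound via \eqref{lrest} and \eqref{axt}, and for the $K$-piece the same split at $|y-B^\eps(t)|=R_t/2$ with the center-of-vorticity cancellation plus the moment-of-inertia bound of Lemma \ref{lem:2} on the inner disk and the $L^\infty$-rearrangement bound on the annulus. The ``mismatch'' term you track separately is exactly the paper's $H_1''$, which is absorbed by Chebyshev into the $5I_\eps(t)/(\pi R_t^3)$ bound (its inner estimate gives constant $3$, leaving room for the leftover), so your slightly different bookkeeping costs at most an irrelevant constant factor in \eqref{stimv}.
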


\begin{proof}
Letting $x=x(t)$, from \eqref{u=}, \eqref{MgammaFb}, \eqref{sH}, and \eqref{bpunto} we have,
\begin{equation}
\label{distance1}
\begin{split}
& \frac{\rmd}{\rmd t} |x(t)- B^\eps(t)| = \big(u(x,t) + F^\eps(x,t) - \dot B^\eps(t)\big) \cdot \frac{x-B^\eps(t)}{|x-B^\eps(t)|} \\ & \;\; = \int\!\rmd y\, \omega_\eps(y,t) \,  \big[|\log\eps|\big(F(x,t) - F(y,t) \big) + K(x-y)\big] \cdot  \frac{x-B^\eps(t)}{|x-B^\eps(t)|} \\ & \quad + \int\!\rmd y\,  \omega_\eps(y,t) \, \big[L(x,y)+\mc R(x,y)\big] \cdot  \frac{x-B^\eps(t)}{|x-B^\eps(t)|}  \\ & \quad - |\log\eps|\int\!\rmd y\, \omega_\eps(y,t) \int\!\rmd z\,  \omega_\eps(z,t) \big[L(y,z)+\mc R(y,z)\big] \cdot  \frac{x-B^\eps(t)}{|x-B^\eps(t)|}
\end{split}
\end{equation}
(recall $\int\! \rmd x\, \omega_\eps(x,t) = |\log\eps|^{-1}$ for any $t\ge 0$).

The integral terms in the last two lines have been already estimated to get Eq.~ \eqref{a3s}. In view of \eqref{axt}, we deduce that the sum of the last two terms in the right-hand side of Eq.~\eqref{distance1} is bounded by $2(C_3+C_4) =:C_6$.

The first term in the second line, due the external field, is easily bounded by hypothesis (b) of Assumption \ref{ass:1} and \eqref{Rt},
\begin{equation}
\label{distance4}
\bigg||\log\eps|\int\!\rmd y\, \omega_\eps(y,t) \, [F(x,t) - F(y,t)] \bigg| \le L|\log\eps| \int\!\rmd y\,  \omega_\eps(y,t) \,|x-y|  \le 2L R_t\,.
\end{equation}

For the second term, we split the domain of integration in two parts, the disk $\mc D=\Sigma(B^\eps(t)|R_t/2)$ and the annulus $\mc A= \Sigma(B^\eps(t)|R_t)\setminus\Sigma(B^\eps(t)|R_t/2)$. Then,
\begin{equation}
\label{in A_1,A_2}
\frac{x-B^\eps(t)}{|x-B^\eps(t)|} \cdot \int\! \rmd y\, K(x-y)\, \omega_\eps(y,t) = H_\mc D + H_\mc A\,,
\end{equation}
where
\begin{equation}
\label{in A_1}
H_\mc D(x) = \frac{x-B^\eps(t)}{|x-B^\eps(t)|} \cdot \int_{\mc D}\! \rmd y\, K(x-y)\, \omega_\eps(y,t) 
\end{equation}
and
\begin{equation}
\label{in A_2}
H_\mc A(x)= \frac{x-B^\eps(t)}{|x-B^\eps(t)|} \cdot \int_{\mc A}\! \rmd y\, K(x-y)\, \omega_\eps(y,t)\,.
\end{equation}

We first evaluate the contribution of the integration on $\mc D$. We notice that $H_\mc D(x)$ is exactly equal to the integral $H_1(x')$ appearing in Eq.~\eqref{a1'}, provided $x'=x-B^\eps(t)$ and $R=2h=R_t$. Moreover, to obtain Eq.~\eqref{H_14b} we had to bound $H_1(x')$ for $|x'|\ge R$, which is exactly what we need now, as $|x-B^\eps(t)|=R_t$.  This estimate, adapted to the present context becomes
\begin{equation}
\label{H_14}
|H_\mc D| \le \frac{5 I_\eps(t)}{\pi R_t^3} \le \frac{5C_2}{\pi|\log\eps|^2 R_t^3}\,,
\end{equation}
where we applied Lemma \ref{lem:2} in the last inequality.

We now evaluate $H_\mc A$. Recalling the definition of $K$, 
\begin{equation*}
|H_\mc A| \le \frac{1}{2\pi} \int_{A_2}\! \rmd y\, \frac 1{|x-y|} \, \omega_\eps(y,t)\,.
\end{equation*}
Since the integrand is monotonically unbounded as $y\to x$, we can argue as done to get Eq.~\eqref{axt}: the maximum possible value of the integral is obtained when we rearrange the vorticity mass as close as possible to the singularity. In view of the assumption \eqref{MgammaFb} and since $m_t(R_t/2)$ is equal  to the total amount of vorticity in $\mc A$, this rearrangement reads,
\begin{equation}
\label{h2}
|H_\mc A| \le \frac{M\eps^{-2}}{2\pi} \int_{\Sigma (0|\bar \rho)}\!\rmd y'\, \frac{1}{|y'|} = M\eps^{-2} \bar \rho\,, 
\end{equation}
where the radius $\bar\rho$ is such that $\pi\bar\rho^2 M/(\eps^2|\log\eps|) = m_t(R_t/2)$. The estimate \eqref{stimv} now follows by \eqref{distance1}, \eqref{distance4}, \eqref{in A_1,A_2}, \eqref{H_14}, and \eqref{h2}.
\end{proof}

\begin{lemma}
\label{cor:1}
Fix $T>0$, $\bar r < r_*/2$, and let $T_\eps$ be as in definition \eqref{Teps}, and $\bar T \in (0,T]$ and $\eps_3 \in (0,1)$ be as in Proposition \ref{prop:1} with $\ell=2$. Then, there exists $\bar T_1\in (0,\bar T]$ such that 
\begin{equation}
\label{supp_pro}
\Lambda_\eps(t) \subset \Sigma(B^\eps(t)|\bar r/2) \quad \forall\, t\in [0,\bar T_1\wedge T_\eps] \quad \forall\,\eps\in (0,\eps_3]\,.
\end{equation}
\end{lemma}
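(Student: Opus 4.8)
The plan is to derive a closed differential inequality for the radius $R_t$ defined in \eqref{Rt} and to run a continuity argument on a time interval whose length is independent of $\eps$. Since each trajectory $t\mapsto x(x_0,t)$ solving \eqref{eqchar_nF} is Lipschitz and $t\mapsto B^\eps(t)$ is Lipschitz, the function
\[
R_t=\max_{x_0\in\Lambda_\eps(0)}|x(x_0,t)-B^\eps(t)|
\]
is Lipschitz, and a standard envelope argument shows that its derivative (which exists a.e.) is controlled by the derivative along any trajectory realizing the maximum. Concretely, at a.e.\ $t$ the hypothesis \eqref{hstimv} of Lemma \ref{lem:4} holds for the outermost particle, so the bound \eqref{stimv} gives
\[
\frac{\rmd}{\rmd t} R_t \le 2L R_t + \frac{5C_2}{\pi|\log\eps|^2 R_t^3} + C_6 + \sqrt{\frac{M\,m_t(R_t/2)}{\pi\eps^2|\log\eps|}}\,.
\]

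First I would record that $R_t$ starts tiny: since $\Lambda_\eps(0)\subset\Sigma(\zeta^*|\eps)$ and $B^\eps(0)$, being the center of vorticity, lies in the convex hull of $\Lambda_\eps(0)\subset\overline{\Sigma(\zeta^*|\eps)}$, one has $R_0\le 2\eps<\bar r/4$ once $\eps$ is small. The crucial observation is that the only a priori dangerous terms on the right-hand side, the $R_t^{-3}$ term and the square-root term, become harmless precisely when $R_t$ is bounded below. Indeed, on the range $\bar r/4\le R_t\le \bar r/2$ we have $R_t/2\ge\bar r/8$, so by monotonicity of $R\mapsto m_t(R)$ and Proposition \ref{prop:1} with $\ell=2$ we get $m_t(R_t/2)\le m_t(\bar r/8)\le\eps^2$, whence the last term is bounded by $\sqrt{M/(\pi|\log\eps|)}\to0$; likewise $R_t^{-3}\le(\bar r/4)^{-3}$ makes the second term $O(|\log\eps|^{-2})\to0$, using Lemma \ref{lem:2}. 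Hence there is a constant $C_7>0$ (of size $L\bar r+C_6+1$), independent of $\eps$, such that $\frac{\rmd}{\rmd t}R_t\le C_7$ whenever $\bar r/4\le R_t\le \bar r/2$, for all $\eps\in(0,\eps_3]$ small enough.

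I would then close the argument by contradiction. Set $\bar T_1:=\min\{\bar T,\ \bar r/(8C_7)\}$, which is positive and independent of $\eps$, and suppose that for some $\eps$ the radius $R_t$ first reaches $\bar r/2$ at some $t^*\in(0,\bar T_1\wedge T_\eps]$. Since $R_0<\bar r/4$ and $R_t$ is continuous, let $t^{**}:=\sup\{t\in[0,t^*]\colon R_t\le\bar r/4\}$; then $t^{**}<t^*$, $R_{t^{**}}=\bar r/4$, and $\bar r/4\le R_t\le\bar r/2$ on $[t^{**},t^*]$, so the bound of the previous paragraph applies there. Integrating,
\[
\frac{\bar r}{4}=R_{t^*}-R_{t^{**}}\le C_7\,(t^*-t^{**})\le C_7\,\bar T_1\le \frac{\bar r}{8}\,,
\]
a contradiction. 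Therefore $R_t<\bar r/2$ throughout $[0,\bar T_1\wedge T_\eps]$ for all $\eps\in(0,\eps_3]$, which is exactly the assertion \eqref{supp_pro}.

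The only genuinely delicate point is the passage from the pointwise trajectory estimate \eqref{stimv} to the differential inequality for $R_t$: this is the usual maximum/envelope argument for Lipschitz functions and requires only that the supremum defining $R_t$ be attained (it is, $\Lambda_\eps(0)$ being compact) and that Lemma \ref{lem:4} hold at a maximizer. Everything else reduces to the two-regime bookkeeping above, whose essential inputs are Lemma \ref{lem:2} and Proposition \ref{prop:1} with $\ell=2$; together they tame the two singular terms on exactly the range $R_t\in[\bar r/4,\bar r/2]$ that the continuity argument probes, while the complementary range $R_t<\bar r/4$ needs no estimate at all since there $R_t<\bar r/2$ already.
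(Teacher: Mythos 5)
Your proof is correct and follows essentially the same route as the paper's: the growth bound of Lemma \ref{lem:4} combined with Proposition \ref{prop:1} (with $\ell=2$) on the regime $R_t\in[\bar r/4,\bar r/2]$ yields a uniform speed bound $C_7$, and the choice $\bar T_1\sim \bar r/C_7$ closes the continuity argument. The only (cosmetic) difference is that the paper compares the support radius with the solution $R(t)$ of the differential \emph{equation} \eqref{stimrbis} with $R(0)=\eps$ via a barrier argument, while you work directly with $R_t$ through an envelope/Dini-derivative inequality --- both hinge on the same two-regime bookkeeping and give the same conclusion.
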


\begin{proof}
Fix $T>0$ and let $T_\eps$ be as in definition \eqref{Teps} and $\eps_2$ be as in Lemma \ref{lem:2}. From Lemma \ref{lem:4} it follows that if $\eps\in (0,\eps_2]$ then $\Lambda_\eps(t) \subset \Sigma(B^\eps(t)|R(t))$ for any $t\in (0,T_\eps)$, where $R(t)$ is a solution to 
\begin{equation}
\label{stimrbis}
\dot R(t) = 2L R(t) + \frac{5C_2}{\pi|\log\eps|^2 R(t)^3} + C_6 + \sqrt{\frac{M m_t(R(t)/2)}{\pi\eps^2|\log\eps|}}\,, \quad R(0) = \eps\,.
\end{equation}
Indeed, this is true for $t=0$ and, if at some time $t\in (0,T_\eps)$ a fluid particle initially located at $x(0) = x_0\in \Lambda_\eps(0)$ reaches the boundary of $\Sigma(B^\eps(t)|R(t))$, then $R(t) = |x(t)-B^\eps(t)| =  R_t $ necessarily and hence, by \eqref{stimv}, the radial velocity of $x(t)- B^\eps(t)$ is less than or equal to $\dot R(t)$. 

Let now $\bar T\in (0,T]$ and $\eps_3 \in (0,\eps_2]$ be as in Proposition \ref{prop:1} with $\ell=2$. For $\eps\in (0,\eps_3]$ we define
\[
t_1 := \inf\{ t\in (0,\bar T\wedge T_\eps] \colon R(t) = \bar r/2\}\,,
\]
setting $t_1 = \bar T\wedge T_\eps$ if $R(t)<\bar r/2$ for any $t\in [0,\bar T]$. Since $\Lambda_\eps(t) \subset \Sigma(B^\eps(t)|R(t))$ for any $t\in (0,T_\eps)$, the claim of the lemma reduces to exhibit $\bar T_1\in (0,\bar T]$ such that $t_1 \ge \bar T_1\wedge T_\eps$ for any $\eps\in (0,\eps_3]$. 

If $t_1= \bar T\wedge T_\eps$ the claim follows with $\bar T_1 = \bar T$. Otherwise, if $t_1<\bar T\wedge T_\eps$ we define
\[
t_0 = \inf\{t\in [0,t_1]\colon R(s) > \bar r/4 \;\forall\, s\in [t,t_1] \}
\]
and notice that $R(t_1) = \bar r/2$, $R(t_0) = \bar r/4$, and $R(t) \in [\bar r/4,\bar r/2]$ for any $t\in [t_0,t_1]$. In particular,  $m_t(R(t)/2)\le m(\bar r/8) \le \eps^2$ for any $t\in [t_0,t_1]$. Therefore, by \eqref{stimrbis},
\begin{equation*}
\dot R(t) \le 2L \bar r + \frac{40C_2}{\pi|\log\eps|^2 \bar r^3} + C_6 + \sqrt{\frac{M }{\pi|\log\eps|}} \quad \forall\, t\in [t_0,t_1]\,.
\end{equation*}
This means that there is $C_7>0$ such that $\dot R(t) \le C_7$ for any $ t\in [t_0,t_1]$ and $\eps\in (0,\eps_3]$, which implies the claim by choosing $\bar T_1 < \bar r/(4C_7)$.
\end{proof}

\begin{proof}[Proof of Theorem \ref{thm:2}]
Let $R_*\in (0,r_*)$ be given. We fix $T>0$, $\bar r < R_*/2$, and let $0<\eps_3 \le \eps_2$, $\bar T$, and $\bar T_1$ be as before. 

From Lemma \ref{lem:2} we have, for any $t\in[0,\bar T_1\wedge T_\eps]$ and $\eps\in (0,\eps_3]$,
\begin{equation}
\label{Bq}
|B^\eps(t) - q^\eps(t)| = \bigg| |\log\eps| \int\!\rmd x\, \omega_\eps(x,t) |x-q^\eps(t)| \bigg| \le \sqrt{\frac{C_2}{|\log\eps|}}\,.
\end{equation}
We next prove that
\begin{equation}
\label{bz}
\lim_{\eps\to 0} \max_{t\in [0,\bar T_1\wedge T_\eps]} |B^\eps(t) - \zeta(t)| = 0\,,
\end{equation}
with $\zeta(t)$ as in \eqref{zett}  (with $a=1$). From the definition \eqref{Teps} of $T_\eps$ and \eqref{supp_pro} this implies that $T_\eps > \bar T_1$ for any $\eps$ small enough. Therefore, also in view of Eqs.~\eqref{lem1a} and \eqref{Bq}, the theorem follows with $T_* = \bar T_1$, $\eps_*\in (0,\eps_3]$ small enough, $\zeta^\eps(t) = q^\eps(t)$, and $\rho_\eps =\sqrt{\eps}$.

From Eq.~\eqref{bpunto}, item (b) in Assumption \ref{ass:1}, Eq.~\eqref{lrest}, and the explicit expression of $L(x,y)$ given in \eqref{sL} we have,
\begin{equation}
\label{b12}
|\dot B^\eps_1(t) - Q_\eps(t)| + |\dot B^\eps_2(t)| \le 2 \frac{C_F+C_3}{|\log\eps|}\quad \forall\, t\in [0,\bar T_1\wedge T_\eps] \quad \forall\,\eps\in (0,\eps_3]\,,
\end{equation}
where
\[
Q_\eps(t) := |\log\eps|\int\!\rmd x\, \omega_\eps(x,t) \frac{1}{4\pi x_2} \int\!\rmd y\, \log\frac {1+|x-y|}{|x-y|} \omega_\eps(y,t) \,.
\]
In particular, recalling also \eqref{initialF}, we deduce that 
\begin{equation}
\label{q1}
\lim_{\eps\to 0} \max_{t\in [0,\bar T_1\wedge T_\eps]} |B^\eps_2(t) - r_*| = 0 \,.
\end{equation}
To compute the asymptotic as $\eps\to 0$ of $Q_\eps(t)$, $t\in [0,\bar T_1\wedge T_\eps]$, we decompose,
\[
Q_\eps(t) := Q_\eps^1(t) + Q_\eps^2(t)
\]
with 
\[
\begin{split}
Q_\eps^1(t) & :=  |\log\eps| \int_{\Sigma(q^\eps(t),\eps|\log\eps|)}\!\rmd x\, \omega_\eps(x,t) \\ & \qquad \times  \frac{1}{4\pi x_2} \int_{\Sigma(q^\eps(t),\eps|\log\eps|)}\!\rmd y\, \log\frac {1+|x-y|}{|x-y|} \omega_\eps(y,t)\,.
\end{split}
\]
The rest $Q_\eps^2(t) = Q_\eps(t) - Q_\eps^1(t)$ is the sum of three terms, each one is the integration of the same function, which in view of Eq.~\eqref{lrest} is bounded by 
\[
\mc G(x,y) := \frac{1}{2\pi r_*} \log\frac {1+|x-y|}{|x-y|} \omega_\eps(x,t)\omega_\eps(y,t)\,,
\]
on a region where at least one between the $x$ and the $y$ variable is confined to the set $\Sigma(q^\eps(t), \eps|\log\eps|)^\complement$. Therefore, since $\mc G$ is symmetric, 
\[
\begin{split}
Q_\eps^2(t) & \le \frac{3|\log\eps|}{2\pi r_*} \int_{\Sigma(q^\eps(t),\eps|\log\eps|)^\complement}\!\rmd x\, \omega_\eps(x,t) \int\!\rmd y\, \log\frac {1+|x-y|}{|x-y|} \omega_\eps(y,t) \\ & \le  \frac{C_1}{\log|\log\eps|} \sup_x \alpha(x,t) \le \frac{C_1C_4}{\log|\log\eps|}\,,
\end{split}
\]
where $\alpha(x,t)$ is defined in \eqref{axt1} and bounded in \eqref{axt}, and we used Eq.~\eqref{lem1b}.

Concerning $Q_\eps^1(t)$, inserting a lower bound to $\frac{1}{4\pi x_2}\log\frac {1+|x-y|}{|x-y|}$ in the domain of integration and then using Eq.~\eqref{lem1b} we obtain,
\begin{equation}
\label{q2}
\begin{split}
Q_\eps^1(t) & \ge \frac{|\log\eps|}{4\pi (q^\eps_2(t)+\eps|\log\eps|)} \log\frac {1+2\eps|\log\eps|}{2\eps|\log\eps|} \bigg(\int_{\Sigma(q^\eps(t),\eps|\log\eps|)}\!\rmd x\, \omega_\eps(x,t)\bigg)^2 \\ & \ge  \frac{|\log\eps|}{4\pi (q^\eps_2(t)+\eps|\log\eps|)} \log\frac {1+2\eps|\log\eps|}{2\eps|\log\eps|} \frac{1}{|\log \eps|^2}\bigg( 1 - \frac{C_1}{\log|\log\eps|}\bigg)^2\,.
\end{split}
\end{equation}
On the other hand, by the same argument leading to \eqref{axt},
\begin{equation}
\label{q3}
\begin{split}
Q_\eps^1(t) & \le \frac{1}{4\pi (q^\eps_2(t)-\eps|\log\eps|)} \sup_x \int\!\rmd y\, \log\frac {1+|x-y|}{|x-y|} \omega_\eps(y,t) \\ & \le  \frac{ M}{2\eps^2 |\log\eps|(q^\eps_2(t)-\eps|\log\eps|)}  \bigg\{\frac{\bar\rho^2}{2} \log\frac{1+\bar\rho}{\bar\rho} - \frac 12 \int_0^{\bar\rho}\!\rmd \rho\, \frac{\rho}{1+\rho} \bigg\}\,, 
\end{split}
\end{equation}
with $\bar\rho$ such that $\pi\bar\rho^2 M/(\eps^2|\log\eps|)= 1/|\log\eps|$. From \eqref{Bq} and \eqref{q1} we deduce that the right-hand side in both \eqref{q2} and \eqref{q3} converges to $1/(4\pi r_*)$ as $\eps\to 0$, so that, in view of \eqref{b12},
\[
\lim_{\eps\to 0} \max_{t\in [0,\bar T_1\wedge T_\eps]} \bigg|B^\eps_1(t) - \bigg(z_* +\frac{t}{4\pi r_*}\bigg) \bigg| = 0 \,,
\]
which, together with Eq.~\eqref{q1}, proves \eqref{bz}.
\end{proof}

\section{Related problems}
\label{sec:4}

First of all, we recall the so-called vortex-wave system, a quite natural generalization of the planar case, which has been introduced in \cite{MaP91}. In this system, a smooth component of vorticity and point vortices evolve together via the Euler equation. In presence of axisymmetry without swirl, the methods of the present paper should allow to describe a similar mixed system, where vortex rings evolve in a background of smooth vorticity. In the rest of the section we discuss two open problems which are closely related to the subject of this paper.

\subsection{Large vortex rings}
\label{sec:4.1}

We suppose the initial configuration is given by $N$ vortex rings with a large radius, i.e., supported in disks $\Sigma((z_i,r_\eps+r_i),\eps)$ with $r_\eps$ large, and we study their evolution in the double limit when $\eps \to 0$ and $r_\eps \to \infty$ simultaneously. In this case, the intensities $A_i := \int\! \rmd x \, \omega_{i,\eps}(x,0)$ of each vortex ring are kept fixed independent of $\eps$.

We firstly consider the case $r_\eps = \const \eps^{-\alpha}$, where $\alpha >0$. In the limit $\eps \to 0$, the system converges to the point vortex model (briefly discussed in Appendix \ref{app:a}). This has been firstly proven in \cite {Mar99} for any fixed time and, more recently, extended  in \cite{CS} to time intervals diverging as $\eps \to 0$. If instead $r_0 = \const |\log\eps|$ then, at least formally, the system converges to a dynamical system, similar to the point vortex model, introduced and studied in \cite {MN}. It is defined by the following system of ordinary differential equations,
\[
\dot{z}_i(t) = \sum_{\substack{j=1 \\ j\ne i}}^N A_j K(z_i(t)-z_j (t)) +  A_i \begin{pmatrix} 1 \\ 0 \end{pmatrix}\,,
\]
with $K(\cdot)$ as in Eq.~\eqref{vel-vor3}. In the case of a vortex ring alone, when the motion reduces to a uniform translation, the convergence at any positive time has been proven rigorously in \cite{MN}, while for $N>1$ the problem is open and it does not seem easy to solve.

\subsection{Vanishing viscosity limit}
\label{sec:4.2}

In presence of a viscosity $\nu$ the incompressible fluid is governed by the Navier-Stokes equation. Moreover, in absence of a boundary, smooth solutions of the Navier-Stokes equation converge to the smooth solutions of the Euler equations as $\nu\to 0$. The case in which the initial data become singular is much more complicate.

In the case of planar symmetry the problem has been solved for vortices with intensities of the same sign \cite {Mar90}, when $\nu\le \const \eps^{- \alpha}$ with any $\alpha>0$ \cite{Mar98}, and when $\nu \to 0$ independently of $\eps \to 0$ \cite{Gal11}.

In the case of axisymmetry without swirl the results are less satisfactory. For $r_0$ fixed and $N=1$, as far as we know, the only result is for $\nu \le \const \eps^2 |\log\eps|^\alpha$ with $\alpha < 1$ \cite{BrM}. However, it is reasonable that a similar result could be extended to the case of a vortex ring of radius $r_0= \const |\log\eps|$. When $r_0 = \const \eps^{-\alpha}$, $\alpha >0$, the vanishing viscosity limit has been proven for a vortex alone at any fixed time in \cite{Mar07}, but the extension to the case of $N$ vortex rings and very long times (i.e., diverging as $\eps\to 0$) should not require too much effort. 

\appendix 

\section{Planar symmetry}
\label{app:a}

In this appendix we briefly recall some results concerning the time evolution of concentrated Euler flows with planar symmetry (without giving a complete list of references on this topic). The Euler equations for an incompressible inviscid fluid in the whole space with planar symmetry and constant density reads
\[
\partial_t\omega(x,t) + (u \cdot \nabla) \omega(x,t,) = 0\,, \quad \nabla \cdot u(x,t)  = 0\,,\quad x\in \bb R^2\,.
\]
By assuming that $u$ vanishes at infinity, the velocity field is reconstructed from the vorticity as
\[
u(x,t) = \int\!\rmd y\, K(x-y) \, \omega(y,t)\,, 
\]
with $K(\cdot)$ as in Eq.~\eqref{vel-vor3}.

We assume that initially the vorticity is concentrated in $N$ blobs of the form
\[
\omega_\eps(x,0) = \sum_{i=1}^N \omega_{i,\eps}(x,0)\,,
\]
where $\omega_{i,\eps}(x,0)$, $i=1,\ldots, N$, are functions with definite sign such that
\[
\Lambda_{i,\eps}(0) := \supp\, \omega_{i,\eps}(\cdot,0) \subset \Sigma(z_i|\eps)\,, \quad  \Sigma(z_i|\eps)\cap \Sigma(z_j|\eps)=\emptyset\quad \forall\, i \ne j\,, 
\]
with $\eps \in (0,1)$ a small parameter and the points $z_i\in \bb R^2$.

In this case, the solution of the Euler equations is strictly related to the \textit{point vortex model}, the dynamical system defined by the following system of ordinary differential equations,  
\begin{equation}
\label{Pointv}
\dot{z}_i(t) = \sum_{\substack{j=1 \\ j\ne i}}^N A_j K(z_i(t)-z_j (t))\,,
\end{equation}
where $A_j$ is called the ``intensity'' of the $j$th vortex. More precisely, it has been proven that in general, for small $\eps$, the time evolution of the vorticity has the same form,
\[
\Lambda_{i,\eps}(t)  := \supp\, \omega_{i,\eps}(\cdot,r_\eps(t)) \subset \Sigma(z_i(t),r_\eps(t))\,,
\]
with
\[
\Sigma(z_i(t),r_\eps(t)) \cap  \Sigma(z_j(t),r_\eps(t)) =0\quad \forall\, i \ne j\;, 
\]
where $r_\eps(t)$ is a positive function and $\{z_i(t); i=1,\ldots, N\}$ is the solution to Eq.~\eqref{Pointv} with initial conditions $z_i(0)=z_i$ and intensity $A_i = \int\! \rmd x \, \omega_{i,\eps}(x,0)$. The point $z_i(t)$ in the plane thus identifies a straight line in the space around which the vorticity is concentrated. 

When all the intensities $A_i$ have the same sign (positive or negative) Eq.~\eqref{Pointv} has a global in time solution. Instead, if the signs are different there are examples in which collapses could occur (e.g., two vortices collide or a vortex goes to infinity in finite time). However, these pathological events are exceptional. For a review on this issue see, for instance, \cite {MaP94}.

The point vortex model has been introduced in the eighteenth century by Helm\-holtz as a ``solution" of the Euler equations, and widely analyzed in many papers. One hundred years later, it has been considered as a numerical approximation of the Euler evolution for very irregular initial data. As a numerical tool, this system is considered when $N$ is very large and the intensity of each vortex very small (of the order of $N^{-1}$).  On this topic there are several papers, we only quote the recent review \cite{CGP14}.

As explained at the beginning of this appendix, a different point of view is adopted for finite $N$ and consists in considering the point vortex model as an approximation of $N$ very concentrated vortices, say with support of diameter 2$\eps \to 0$. It is worthwhile to emphasize that it cannot be an approximation of each evolved path, because the length of the trajectory of a fluid element diverges as $\eps \to 0$. On the other hand, by virtue of rapid rotations, a system of $N$ disjoint concentrated patches of vorticity converges as a measure to a linear combination of Dirac measures $\sum_{i=1}^N A_i \delta_{z_i(t)}$ for any positive time. This convergence has been proven 25 years ago. Recently, the problem on how long the sharp localization of the vorticity remains valid has been analyzed in \cite{BuM}.

It is possible to introduce a small viscosity $\nu$ and study the small viscosity limit $\nu \to 0$. For finite times  the status of art has been already discussed in Section \ref{sec:4.2}. The validity of the convergence on very long times (diverging as $\eps\to 0$) is analyzed in \cite {CS}.

\section{Proof of some technical results}
\label{app:b}

\begin{proof}[Proof of Lemma \ref{lem:3}]
Letting 
\begin{equation}
\label{ar}
a(x,y) := \frac{|x-y|}{\sqrt{x_2y_2}}\,,
\end{equation}
we have,
\[
H(x,y) = - \frac{I_1(a(x,y))}{2\pi} \frac{(x-y)^\perp }{x_2\sqrt{x_2y_2}} + \frac{I_2(a(x,y))}{2\pi}\frac{1}{x_2}\sqrt{\frac{y_2}{x_2}} \begin{pmatrix} 1 \\ 0 \end{pmatrix}\,,
\]
where
\begin{equation}
\label{I1I2}
I_1(a) = \int_0^\pi\!\rmd\theta\, \frac{\cos\theta}{[a^2+2(1-\cos\theta)]^{3/2}} \,, \quad I_2(a) = \int_0^\pi\!\rmd\theta\, \frac{1-\cos\theta}{[a^2+2(1-\cos\theta)]^{3/2}}\,.
\end{equation}
By an explicit computation, see, e.g., the Appendix in \cite{Mar99}, for any $a>0$,
\[
I_1(a) = \frac{1}{a^2} + \frac 14 \log\frac{a}{1+a} + \frac{c_1(a)}{1+a}\,, \quad I_2(a) = -\frac 12 \log\frac{a}{1+a} + \frac{c_2(a)}{1+a}\,,
\]
with $c_1(a)$, $c_1'(a)$, $c_2(a)$, $c_2'(a)$ uniformly bounded for $a\in (0,+\infty)$. Therefore, the  kernel $R(x,y)$ defined by \eqref{sH} is given by
\[
\mc R (x,y) = \sum_{j=1}^6 R^j(x,y)\,,
\]
with, for $a = a(x,y)$ as in \eqref{ar},
\begin{align*}
R^1(x,y) & = \frac{1}{2\pi} \bigg(1 - \sqrt{\frac{y_2}{x_2}}\bigg) \frac{(x-y)^\perp }{|x-y|^2}\,,\;\; R^2(x,y) =  \frac{1}{8\pi} \bigg(\log\frac{1+a}{a}\bigg) \frac{(x-y)^\perp }{x_2\sqrt{x_2y_2}}\,, \\ R^3(x,y) & = \frac{1}{4\pi x_2} \sqrt{\frac{y_2}{x_2}} \bigg(\log\frac{|x-y|}{1+|x-y|} - \log\frac{a}{1+a}\bigg) \begin{pmatrix} 1 \\ 0 \end{pmatrix} \,, \\ R^4(x,y) & = \frac{1}{4\pi x_2} \bigg(1-\sqrt{\frac{y_2}{x_2}}\bigg) \log\frac{|x-y|}{1+|x-y|} \begin{pmatrix} 1 \\ 0 \end{pmatrix} \,, \\ R^5(x,y)  & = -\frac{c_1(a)}{2\pi(1+a)} \frac{(x-y)^\perp }{x_2\sqrt{x_2y_2}}\,, \quad R^6(x,y) = \frac{c_2(a)}{2\pi(1+a)x_2}\sqrt{\frac{y_2}{x_2}} \begin{pmatrix} 1 \\ 0 \end{pmatrix}\,.
\end{align*}
Using that
\[
\bigg|1-\sqrt{\frac{y_2}{x_2}} \bigg| = \frac{|y_2-x_2|}{x_2+\sqrt{x_2y_2}} \le \frac{|x-y|}{x_2}
\]
and
\[
\bigg|\log\frac{|x-y|}{1+|x-y|} - \log\frac{a}{1+a}\bigg| = \bigg| \log\frac{1+a}{(x_2y_2)^{-1/2}+a}\bigg| \le \frac 12 |\log(x_2y_2)|\,,
\]
we have,
\begin{align*}
& |R^1(x,y)| = \frac{1}{2\pi} \bigg|1-\sqrt{\frac{y_2}{x_2}} \bigg| \frac{1}{|x-y|}\le \frac{1}{2\pi x_2}\,, \\
& |R^2(x,y)| = \frac{1}{8\pi x_2} \bigg(\log\frac{1+a}{a}\bigg) \frac{|x-y|}{\sqrt{x_2y_2}} \le \frac{1}{8\pi x_2} \, \sup_{\rho>0}\bigg( \rho\log\frac{1+\rho}{\rho}\bigg)\,, \\
& |R^3(x,y)| + |R^6(x,y)| \le  \frac{1}{4\pi x_2} \sqrt{\frac{y_2}{x_2}} \bigg(|\log(x_2y_2)| + \sup_{\rho>0} \frac{2c_2(\rho)}{1+\rho}\bigg)\,, \\
& |R^4(x,y)| = \frac{1}{4\pi x_2} \bigg|1-\sqrt{\frac{y_2}{x_2}} \bigg| \log\frac{1+|x-y|}{|x-y|} \le \frac{1}{4\pi x_2^2} \,\sup_{\rho>0}\bigg( \rho\log\frac{1+\rho}{\rho}\bigg)\,, \\
& |R^5(x,y)| = \frac{|c_1(a)|}{2\pi (1+a)} \frac{|x-y|}{x_2\sqrt{x_2y_2}} \le \frac{1}{2\pi x_2} \, \sup_{\rho>0}\frac{\rho c_1(\rho)}{1+\rho}\,.
\end{align*}
In conclusion, there is $C_0>0$ such that
\[
\begin{split}
& |R^1(x,y)| + |R^2(x,y)| + |R^5(x,y)| \le \frac{C_0}{x_2}\,, \quad |R^4(x,y)| \le \frac{C_0}{x_2^2}\,, \\ & |R^3(x,y)| + |R^6(x,y)| \le  \frac{C_0}{x_2} \sqrt{\frac{y_2}{x_2}} \bigg(1+|\log(x_2y_2)|\bigg)\,.
\end{split}
\]
The lemma is thus proven.
\end{proof}

\begin{proof}[Proof of Lemma \ref{lem:1}]
The proof of \cite[Lemma 2.1]{BCM00} is based on the computation of the asymptotic behavior as $\eps\to 0$ of the kinetic energy
\[
E =  \frac 12 \int\!\rmd\bs\xi\, |\bs u (\bs\xi,t)|^2 = \frac 12 \int\! \rmd x \,  2\pi x_2 |u(x,t)|^2\,,
\]
which follows from the assumptions on the initial vorticity, Eq.~\eqref{cons-omr}, and the conservation along the motion of $E$ and of the quantities
\[ 
M_0  = \int\! \rmd x\, \omega_\eps(x,t)\,, \qquad M_2  = \int\! \rmd x \, x_2^2\, \omega_\eps(x,t)\,.
\]

In the present case, since the vector field $\bs F^\eps =(F^\eps_z,F^\eps_r,F^\eps_\theta) := (F^\eps_1,F^\eps_2,0)$ has zero divergence, the conservation law of $M_0$ is still valid by Liouville's theorem and Eq.~\eqref{cons-omr},
\begin{equation}
\label{m0c}
M_0(t) = \frac{1}{2\pi} \int\! \rmd\bs\xi\, \frac{\omega_\eps(\bs\xi,t)}{r} =  \int\! \rmd\bs\xi_0 \, \frac{\omega_\eps(\bs\xi_0,0)}{r_0} = M_0(0)
\end{equation}
(above, the change of variables is $\bs\xi=\phi^t(\bs\xi_0)$ with $\phi^t$ the flow generated by $\dot{\bs\xi} = \bs u(\bs\xi,t) + \bs F^\eps(\bs\xi,t)$).

Since $\omega_\eps(x,t)$ has compact support, the time derivative of $M_2$ can be computed using Eq.~\eqref{weqF} with $f(x,t) = x_2^2$, hence
\[
\dot M_2 = \int\! \rmd x \, \omega_\eps(x,t) \,2 x_2  F^\eps_2(x,t) ,
\]
which implies $|\dot M_2| \le 2C_F |\log\eps|^{-3/2} \sqrt{M_2}$ in view of  item (b) in Assumption \ref{ass:1}. Therefore, 
\[
M_2 \le 2(|\zeta^0_2| + \eps)^2 M_0 + 32C_F^2T^2 |\log\eps|^{-3} \le \const |\log\eps|^{-1}\,,
\] 
where we used also Eq.~\eqref{initialF}. This is the same estimate, but for a larger constant, which is obtained in absence of $F^\eps$, and the particular value of this constant is easily seen to be irrelevant in the proof of \cite[Thm.~1]{BCM00}.

Concerning the variation of $E$, in the proof of \cite[Thm.~1]{BCM00}  the energy conservation is used to deduce a lower bound of the form $E>C^*/|\log\eps| + \const /|\log\eps|^2$, because such estimate can be easily shown to be true for the energy computed at time zero. The crucial point in the argument is that $C^*$ is the same constant appearing in the leading term of an upper bound deduced for the energy $E$ at any time. Therefore, the argument given in \cite{BCM00} applies also here if we show that $|\dot E| \le \const/|\log\eps|^2$.   

To compute $\dot E$, we introduce the stream function
\[
\Psi(x,t) = \int\!\rmd y\, S(x,y)\, \omega_\eps(y,t)\,, 
\]
where
\[
S(x,y) := \frac{x_2y_2}{2\pi} \int_0^\pi\!\rmd\theta\, \frac{\cos\theta}{\sqrt{|x-y|^2 + 2x_2y_2(1-\cos\theta)}}\,,
\]
so that $u(x,t) = x_2^{-1} \nabla^\perp\Psi(x,t)$ and the energy reads (see, e.g., \cite{BCM00,F})
\[
E = \pi \int\! \rmd x\, \Psi(x,t) \, \omega_\eps(x,t) = \pi \int\!\rmd x \int\!\rmd y\, S(x,y)\, \omega_\eps(x,t) \, \omega_\eps(y,t)\,.
\]
Since $\omega_\eps(x,t)$ has compact support we can apply Eq.~\eqref{weqF}, so that, since $u\cdot \nabla \Psi=0$,
\[
\dot E = \pi \int\!\rmd x\, \omega_\eps(x,t) \, (F^\eps \cdot \nabla \Psi + \partial_t \Psi)(x,t)\,,
\]
where, using again Eq.~\eqref{weqF},
\[
\partial_t\Psi(x,t) = \int\!\rmd y\, \omega_\eps(y,t)\, (u+F^\eps)(y,t) \cdot \nabla_y S(x,y)\,.
\]
By the symmetry of $S(x,y)$ we have $\Psi(y,t) = \int\!\rmd x\, S(x,y)\, \omega_\eps(x,t)$, whence
\[
\begin{split}
\int\!\rmd x\, \omega_\eps(x,t)\, \partial_t \Psi (x,t) & = \int\!\rmd x\, \omega_\eps(x,t) \int\!\rmd y\, \omega_\eps(y,t)\, (u+F^\eps)(y,t) \cdot \nabla_y S(x,y) \\ & = \int\!\rmd y\, \omega_\eps(y,t)\, (u+F^\eps)(y,t) \cdot \nabla_y \int\!\rmd x\,S(x,y)\, \omega_\eps(x,t)\\ &  = \int\!\rmd y\, \omega_\eps(y,t) \, (F^\eps \cdot \nabla \Psi)(y,t)\,, 
\end{split} 
\]
where we used again the orthogonality condition $u\cdot \nabla \Psi=0$. In conclusion,
\[
\begin{split}
\dot E & = 2 \pi \int\!\rmd x\, \omega_\eps(x,t) (F^\eps \cdot \nabla \Psi)(x,t) \\ & = \ 2\pi \int\!\rmd x \!\int\! \rmd y\, \omega_\eps(x,t)\, F^\eps(x,t)\cdot  \nabla_xS(x,y)\, \omega_\eps(y,t)\,.
\end{split}
\]
Recalling that $\nabla^\perp\Psi(x,t) = x_2 u(x,t)$, from Eqs.~\eqref{u=}, \eqref{H1}, and \eqref{H2}, we have 
\[
\begin{split}
\frac{\partial S}{\partial x_1} & = \frac{x_2y_2}{2\pi} \int_0^\pi\!\rmd\theta\, \frac{(y_1-x_1)\cos\theta}{[|x-y|^2 + 2x_2y_2(1-\cos\theta)]^{3/2}}\,, \\ \frac{\partial S}{\partial x_2} & = \frac{x_2y_2}{2\pi} \int_0^\pi\!\rmd\theta\, \frac{y_2-x_2\cos\theta}{[|x-y|^2 + 2x_2y_2(1-\cos\theta)]^{3/2}}\,.
\end{split}
\]
Therefore, recalling the definitions \eqref{ar} and \eqref{I1I2},
\[
\nabla_xS(x,y) = \frac{I_1(a)(y-x)}{2\pi\sqrt{x_2y_2}}  + \frac{I_2(a)}{2\pi}\sqrt{\frac{y_2}{x_2}} \begin{pmatrix} 0 \\ 1 \end{pmatrix} =: A(x,y) + B(x,y),
\]
with $A(x,y)$ anti-symmetric, so that
\[
\begin{split}
\dot E & = \pi \int\!\rmd x \!\int\! \rmd y\, \omega_\eps(x,t) [F^\eps(x,t) - F^\eps(y,t)] \cdot A (x,y) \, \omega_\eps(y,t) \\ & \quad + 2\pi \int\!\rmd x \!\int\! \rmd y\, \omega_\eps(x,t) F^\eps(x,t)\cdot B(x,y) \, \omega_\eps(y,t)\,.
\end{split}
\]
In view of item (b) in Assumption \ref{ass:1}, as $\div(x_2 F^\eps) =0$ implies $F^\eps_2((x_1,0),t)=0$, we deduce that
\[
\begin{split}
& \big|(F^\eps(x,t) - F^\eps(y,t)) \cdot A (x,y)\big| \le \frac{L}{2\pi|\log\eps|} I_1(a) \frac{|y-x|^2 }{\sqrt{x_2y_2}} \,, \\ & \big| F^\eps(x,t)\cdot B(x,y) \big| = \big| (F^\eps(x,t) - F^\eps(x_1,0,t)) \cdot B(x,y) \big| \le \frac{L}{2\pi|\log\eps|} I_2(a) \sqrt{x_2y_2}\,. 
\end{split}
\]
We now use the following estimates, see the Appendix in \cite{Mar99},
\[
\begin{split}
I_1(a) & \le \int_0^\pi\!\rmd\theta\, \frac{\cos\frac\theta 2}{[a^2+2(1-\cos\theta)]^{3/2}} = \int_0^\pi\!\rmd\theta\, \frac{\cos\frac\theta 2}{[a^2+4\sin^2\frac\theta 2]^{3/2}} = \frac{2}{a^2\sqrt{a^2+4}}\,, \\ I_2(a) & \le \frac 12\log\left(2+\sqrt{a^2+4}\right) - \frac 12 \log a + \const \\ & = \frac 12 \log\left(2\sqrt{x_2y_2} + \sqrt{b+4x_2y_2}\right) -\frac 14 \log b + \const,
\end{split}
\]
where in the last identity we introduced the quantity $b=|x-y|^2$ as in \cite{BCM00}. Therefore,
\[
\big| (F^\eps(x,t) - F^\eps(y,t)) \cdot A (x,y)\big| \le  \frac{\const}{|\log\eps|}\sqrt{x_2y_2}\,,
\]
\[
\begin{split}
\big| F^\eps(x,t)\cdot B(x,y) \big| & \le \frac{\const}{|\log\eps|} \\ & \times \sqrt{x_2y_2} \left[ \log\left(2\sqrt{x_2y_2} + \sqrt{b+4x_2y_2}\right) -\frac 12 \log b + \const \right],
\end{split}
\]
so that 
\[
\begin{split}
|\dot E| & \le \frac{\const}{|\log\eps|} \int\!\rmd x \!\int\! \rmd y\, \omega_\eps(x,t) \omega_\eps(y,t) \\ & \quad \times \sqrt{x_2y_2} \left\{\const + \log\left(2\sqrt{x_2y_2} + \sqrt{b+4x_2y_2}\right) -\frac 12 \log b \right\}.  
\end{split}
\]
Apart from the factor $\const/|\log\eps|$, the right-hand side in the above inequality coincides with the first upper bound on $E$ appearing in \cite[Eq.\ (2.30)]{BCM00}, from which the estimate $E \le \const/|\log\eps|$ is deduced. We conclude that $|\dot E| \le \const / |\log\eps|^2$ as required.
\end{proof}

\end{document}